\theoremstyle{plain}
\newtheorem{theorem}{Theorem}
\newtheorem{corollary}{Corollary}
\newtheorem{fact}{Fact}
\newtheorem{proposition}{Proposition}
\newtheorem{lemma}{Lemma}
\newtheorem{pattern}{Forbidden~Pattern}
\theoremstyle{definition}  
\newcommand{\df}[1]{\emph{#1}}
\newcommand{\stel}[2]{s_{#2}^{#1}}
\newcommand{\NP}{{\ensuremath{\mathcal{NP}}}}
\newcommand{\BS}{{\color{Blue}\ensuremath{\mathcal{B}lue}}}
\newcommand{\RS}{{\color{Red}\ensuremath{\mathcal{R}ed}}}
\newcommand{\GS}{{\color{Green}\ensuremath{\mathcal{G}reen}}}
\begin{document}

\title{Four Pages Are Indeed Necessary for Planar~Graphs}

\author{%
	Michael~A.~Bekos$^1$,
	Michael~Kaufmann$^1$, 
	Fabian~Klute$^{2,3}$,\\ 
	Sergey~Pupyrev$^4$,
	Chrysanthi~Raftopoulou$^5$,
	Torsten~Ueckerdt$^6$
\\
\medskip
\\
\small$^1$Wilhelm-Schickhard-Institut f\"ur Informatik, Universit\"at T\"ubingen, T\"ubingen, Germany\\
\small\texttt{\{bekos,mk\}@informatik.uni-tuebingen.de}
\\
\small$^2$Institute of Logic and Computation, Technische Universit\"at Wien, Wien, Austria\\
\small$^3$Deptartment of Information and Computing Sciences, Utrecht University, the Netherlands\\
\small\texttt{f.m.klute@uu.nl}
\\
\small$^4$Facebook, Inc., Menlo Park, CA, USA\\
\small\texttt{spupyrev@gmail.com}
\\
\small$^5$School of Applied Mathematical \& Physical Sciences, NTUA, Athens, Greece\\
\small\texttt{crisraft@mail.ntua.gr}
\\
\small$^6$Institute of Theoretical Informatics, Karlsruhe Institute of Technology, Karlsruhe, Germany\\
\small\texttt{torsten.ueckerdt@kit.edu}	
}

\date{}
\maketitle

\begin{abstract}  
An embedding of a graph in a book consists of a linear order of its vertices
along the spine of the book and of an assignment of its edges to the pages of
the book, so that no two edges on the same page cross. The book thickness of a
graph is the minimum number of pages over all its book embeddings. Accordingly,
the book thickness of a class of graphs is the maximum book thickness over all
its members. %
In this paper, we address a long-standing open problem regarding the exact book
thickness of the class of planar graphs, which previously was known to be either
three or four. We settle this problem by demonstrating planar graphs that
require four pages in any of their book embeddings, thus establishing that the
book thickness of the class of planar graphs is four.
\end{abstract}

\section{Introduction}
\label{sec:introduction}

Embedding graphs in books is a fundamental problem in graph theory, which has
been the subject of intense research over the years mainly due to the numerous
applications that it
finds~\cite{DBLP:journals/jgaa/BiedlSWW99,DBLP:conf/compgeom/BinucciLGDMP19,DBLP:journals/jal/Wood02,CLR87,DBLP:conf/focs/Jacobson89,DBLP:journals/siamcomp/MunroR01,DBLP:journals/tc/Rosenberg83,DBLP:conf/stoc/Pratt73,DBLP:journals/jacm/Tarjan72}. Seminal results date back to the 70s by Ollmann~\cite{Oll73}, while several important milestones appear regularly over the years~\cite{DBLP:journals/jct/BernhartK79,DBLP:conf/compgeom/BinucciLGDMP19,CLR87,DBLP:journals/dam/GanleyH01,DBLP:conf/focs/Heath84,DBLP:journals/jcss/Yannakakis89}. In a \df{book embedding} of a graph, the vertices are restricted to a line, called the~\df{spine} of the book, and the edges are assigned to different half-planes delimited by the spine, called \df{pages} of the book, so that no two edges on the same page cross; see Fig.~\ref{fig:example}. The \df{book thickness} (or  \df{stack~number} or \df{page~number}) of a graph is the minimum number of pages required by any of its book~embeddings.

Back in 1979, Bernhart and Kainen preliminary observed that the book thickness
of a graph can be linear in the number of its vertices; for instance, the book
thickness of the complete $n$-vertex graph $K_n$ is $\lceil n/2 \rceil$;
see~\cite{DBLP:journals/jct/BernhartK79}. Sublinear bounds on the book thickness
are known for several classes of graphs;
see~\cite{DBLP:journals/jal/Malitz94,DBLP:journals/jal/Malitz94a,DBLP:journals/dcg/DujmovicW07,DBLP:journals/dam/GanleyH01,Bla03,DBLP:books/daglib/0030491,DBLP:journals/algorithmica/BekosBKR17}. The most notable such class seems to be the one of planar graphs, as is evident from the numerous papers that have been published on the topic over the years~\cite{DBLP:conf/stoc/BussS84,DBLP:conf/focs/Heath84,Istrail1988a,DBLP:journals/jct/BernhartK79,NC08,DBLP:journals/appml/KainenO07,DBLP:journals/mp/CornuejolsNP83,DBLP:conf/cocoon/RengarajanM95,DBLP:journals/dcg/FraysseixMP95,DBLP:journals/algorithmica/BekosGR16,DBLP:conf/esa/0001K19,Ewald1973,GH75,DBLP:conf/focs/Heath84,DBLP:journals/dam/GuanY2019,DBLP:conf/stoc/Yannakakis86,DBLP:journals/jcss/Yannakakis89,DBLP:conf/gd/Pupyrev17,DBLP:conf/wg/AlamBG0P18}. In particular, the graphs with book thickness one are precisely the outerplanar graphs~\cite{DBLP:journals/jct/BernhartK79}. The graphs with book thickness at most two are the subgraphs of planar Hamiltonian graphs~\cite{DBLP:journals/jct/BernhartK79}, which include planar bipartite~\cite{DBLP:journals/dcg/FraysseixMP95} and series-parallel graphs~\cite{DBLP:conf/cocoon/RengarajanM95}.

The study of the book thickness of general planar graphs was initiated by
Leighton, who back in the 80s asked whether their book thickness is bounded by a
constant; see~\cite{DBLP:conf/stoc/BussS84}. The first positive answer to this
question was given by Buss and Shor~\cite{DBLP:conf/stoc/BussS84}, who proposed
a simple recursive (on the number of separating triangles) algorithm to embed
every planar graph in books with nine pages; note that a planar graph without
separating triangles is Hamiltonian~\cite{Wig82}, and thus embeddable in books
with two pages.

The bound of nine pages by Buss and Shor was improved to seven by
Heath~\cite{DBLP:conf/focs/Heath84}, who introduced an important methodological
foundation called \df{peeling-into-levels}\footnote{In the literature, sometimes
	this technique is erroneously attributed to
	Yannakakis~\cite{DBLP:journals/jcss/Yannakakis89}.}, according to which the
vertices of a planar graph are partitioned into levels such that (i) the
vertices on its unbounded face are at level $0$, and (ii) the vertices that are
on the unbounded face of the subgraph induced by deleting all vertices of levels
$ \leq i-1$ are at level $i$ ($0 < i < n$). It is not difficult to see that each
connected component of the subgraphs induced by the vertices of the same level
is an outerplanar graph, and thus embeddable in a single
page~\cite{DBLP:journals/jct/BernhartK79}. Hence, the main challenge is to embed
the remaining edges, that is, those connecting vertices in consecutive~levels.

Heath~\cite{DBLP:conf/focs/Heath84} managed to address this challenge with a
relatively simple algorithm that uses six pages. In a subsequent work, which is
probably the most cited in the field,
Yannakakis~\cite{DBLP:journals/jcss/Yannakakis89} improved upon Heath's
algorithm. Using the peeling-into-levels technique, he proposed a simple
algorithm that yields embeddings in books with five pages (even though, the
details of the algorithm are left to the reader). With a more complicated and
involved algorithm, which is based on distinguishing different cases of the
underlying order and the edges to be embedded, Yannakakis reduced the required
number of pages to four, which is currently the best-known upper bound on the
book thickness of planar graphs.

The currently best-known lower bound is usually attributed to Goldner and
Harary~\cite{GH75}, who proposed the smallest maximal planar graph that is not
Hamiltonian, and therefore not embeddable in books with two pages; see
Fig.~\ref{fig:goldner-harary-1}. However, this particular graph is a planar
$3$-tree and by a result of Heath~\cite{DBLP:conf/focs/Heath84}, it is
embeddable in a book with three pages; see Fig.~\ref{fig:goldner-harary-2}. Note
that determining the exact book thickness of a planar graph turns out to be an
\NP-complete problem, even for maximal planar graphs~\cite{Wig82}.

\begin{figure}[t]
	\centering
	\begin{subfigure}[b]{.4\textwidth}
	\centering
	\includegraphics[scale=0.9,page=1]{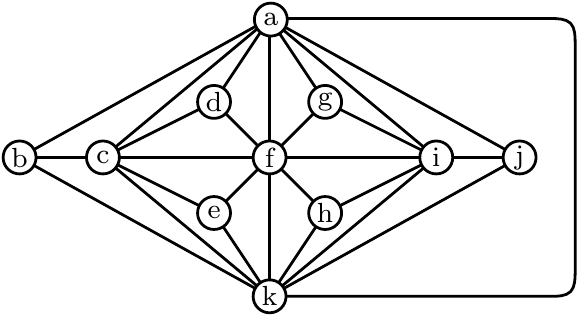}
	\caption{}
	\label{fig:goldner-harary-1}
	\end{subfigure}
	\begin{subfigure}[b]{.59\textwidth}
	\centering
	\includegraphics[scale=1,page=3]{goldner-harary}
	\caption{}
	\label{fig:goldner-harary-2}
	\end{subfigure}
	\caption{%
	Illustration of 
	(a)~the Goldner-Harary graph and	
	(b)~its $3$-page book embedding in which
	edges assigned to different pages are colored differently.}
	\label{fig:example}
\end{figure}

To the best of our knowledge, there is no planar graph described in the
literature that requires more than three pages despite various efforts. In an
extended abstract of~\cite{DBLP:journals/jcss/Yannakakis89}, which appeared at
STOC in 1986~\cite{DBLP:conf/stoc/Yannakakis86}, Yannakakis claimed the
existence of such a graph and provided a sketch of a proof; notably the
arguments in this sketch seem to be sound apart from the fact that some of the
gadget-graphs that are central in the proof are not defined. The details of this
sketch, however, never appeared in a paper. Furthermore, the proof-sketch was
not part of the subsequent journal
version~\cite{DBLP:journals/jcss/Yannakakis89} of the extended
abstract~\cite{DBLP:conf/stoc/Yannakakis86}. Thus the problem of determining
whether there exists a planar graph that requires four pages still remains
unsolved, as also noted by Dujmovi{\'{c}} and
Wood~\cite{DBLP:journals/dcg/DujmovicW07} in 2007, and clearly forms the most
intriguing open problem in the field. Note that, in the same work,
Dujmovi{\'{c}} and Wood proposed a planar graph that might require four pages in
any of its book embeddings. However, they had overlooked a previous result by
Heath~\cite{DBLP:conf/focs/Heath84} regarding the book thickness of planar
$3$-trees, which immediately implies that their claim was not valid. %
A more recent attempt to find a planar graph that requires four pages in any of
its book embeddings was made by Bekos, Kaufmann, and
Zielke~\cite{DBLP:conf/gd/Bekos0Z15}, who proposed a formulation of the problem
of testing whether a given (not necessarily planar) graph admits an embedding
into a book with a certain number of pages as a SAT instance, and systematically
tested several hundred maximal planar graphs but without any particular~success.
Later Pupyrev~\cite{DBLP:conf/gd/Pupyrev17} computed book embeddings of \df{all}
maximal planar graphs of size $n \le 18$ and found no instance that requires
four pages.

\medskip\noindent\textbf{Our contribution.} In this paper, we address  the aforementioned long-standing open problem. 
Our main result is summarized in the following theorem. 

\begin{theorem}\label{thm:main}
There exist planar graphs that do not admit $3$-page book embeddings.
\end{theorem}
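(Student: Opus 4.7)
The plan is to prove the theorem by explicitly constructing a planar graph $G$ and showing, by contradiction, that $G$ admits no $3$-page book embedding. Since the edge count alone is not restrictive (three pages could in principle host up to $6n-9$ edges, far more than the $3n-6$ in a maximal planar graph), the argument must rest on subtler structural obstructions: in any $3$-page embedding, the interaction between the spine order and the partition of edges into three outerplanar pages must be incompatible with a carefully engineered planar structure.

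My first step would be to design a small \emph{gadget} $H$ — a planar subgraph with a designated set of \emph{interface} vertices — and prove a rigidity lemma stating that in every $3$-page book embedding of $H$, the spine order of the interface vertices together with the page assignment of their incident edges must fall into one of a short, explicit list of configurations (up to the $3!$ relabelings of the pages). In other words, $H$ would act as a filter: it rules out all but a few local \emph{patterns} on its interface. The natural theorem environment \texttt{pattern} in the preamble (\emph{Forbidden Pattern}) strongly suggests that the authors formalise the complementary list — the configurations that can never be realised — and this is precisely what I would do as the second step: state and prove a collection of forbidden-pattern lemmas, each saying ``if vertices $v_1,\dots,v_k$ appear on the spine in such-and-such a relative order and the edges among them receive such-and-such a page assignment, then some pair of edges on the same page must cross.''

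The third step is to assemble many copies of the gadget into a single planar graph $G$ by identifying their interface vertices along a shared scaffold — for instance, nested inside the faces of a stacked triangulation, so that planarity is preserved. The assembly has to be chosen so that, by a pigeonhole or counting argument on the pattern types realised by the copies in any putative $3$-page embedding, two neighbouring copies are forced to impose contradictory constraints on a shared vertex or edge (e.g.\ assigning the same edge to two different pages, or requiring a single edge to lie on a page where it must necessarily cross another committed edge). A clean way to drive the contradiction is to arrange the gadgets so that the union of their local constraints is globally unsatisfiable regardless of which of the $3!$ page permutations is chosen at each copy.

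The main obstacle — presumably the reason this problem has resisted progress for over three decades — is finding a gadget that is simultaneously (i) rigid enough that its $3$-page embeddings yield only a tractable, finite list of interface patterns, and (ii) small and flexible enough that many copies can be glued together into a planar (indeed maximal planar) graph in which the patterns become incompatible. One must track not only the cyclic order of interface vertices but also the precise page assigned to every interface-incident edge, since three pages leave considerable symmetric freedom that the construction has to eliminate. The case analysis behind the rigidity lemma, and the verification that the assembled graph admits no consistent pattern choice under any of the $3!$ page relabelings at each copy, is where I expect the bulk of the technical work to lie; preserving planarity while keeping the list of patterns closed under these interactions is the delicate balance the construction must strike.
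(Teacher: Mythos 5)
Your proposal is a strategy outline that matches the general architecture of the paper (a rigid planar gadget, a list of forbidden interface patterns, many copies glued into one planar graph, and a pigeonhole argument forcing an unrealizable pattern), but it stops exactly where the mathematical content begins, so as a proof it has a genuine gap. You never construct the gadget, and the gadget is the whole problem: the paper's $Q_k$ is a specific plane graph built from $K_{2,k}$ with two poles $A,B$, terminals $t_0,\dots,t_{k-1}$, satellite paths inside each face, and two rounds of stellation plus further stellations, and even then its ``rigidity'' is not a short list of interface configurations proved by hand — it consists of two unsatisfiability statements (Facts~1 and~2: no $3$-page embedding exists when, e.g., the poles are consecutive, all $A$-to-terminal edges lie on one page and all $B$-to-terminal edges avoid that page) that are verified by SAT solvers, not by a human case analysis. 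Your plan to ``state and prove a collection of forbidden-pattern lemmas'' for an unspecified gadget is precisely the step that resisted everyone since Yannakakis's 1986 sketch (whose gadgets were likewise left undefined), so asserting it as step two without a candidate graph or a proof method is not a fillable routine detail.

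The assembly and counting step is also more delicate than your sketch suggests. The paper does not glue copies along a scaffold and apply pigeonhole to $3!$ page relabelings per copy; instead, $N\gg 1$ copies of $Q$ all share the \emph{same} pair of poles and the edge $(A,B)$, $N$ is taken astronomically large so that pigeonhole yields $\kappa$ copies in which corresponding vertices appear in the same relative order and corresponding edges get the same pages (majority property), and Erd\H{o}s--Szekeres is then applied to make corresponding vertex pairs form $\kappa$-rainbows, $\kappa$-twists or $\kappa$-necklaces (monotonic property); on top of that, further copies of the whole base graph are attached along satellite edges because two of the resulting cases are only handled by reduction to another case, not directly. The ensuing case analysis (positions of terminals and satellites relative to $(A,B)$, $4$-twists, edges crossing three differently colored edges, placement of stellation vertices) is the bulk of the work and is absent from your proposal. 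In short, your outline points in the right direction but supplies neither the construction nor the two key unsatisfiability facts nor the global case analysis, and the facts themselves are computer-verified in the paper, so a purely hand-proved version of your step two would be a substantially stronger claim than what is known.
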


\noindent Together with Yannakakis' upper bound of four~\cite{DBLP:journals/jcss/Yannakakis89}, Theorem~\ref{thm:main} implies the following corollary.

\begin{corollary}\label{cor:main}
The book thickness of the class of planar graphs is four.
\end{corollary}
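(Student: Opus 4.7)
The plan is to derive the corollary directly by combining the two matching bounds: the upper bound of four pages (Yannakakis~\cite{DBLP:journals/jcss/Yannakakis89}) and the lower bound of four pages (Theorem~\ref{thm:main}). Since the book thickness of a class of graphs is defined as the supremum of the book thicknesses of its members, establishing equality requires (i) an integer $k$ that no member exceeds, and (ii) some member whose book thickness attains (or surpasses minus one) that value. Here $k=4$ will serve both roles.

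For the upper direction, I would simply invoke Yannakakis' theorem, which asserts that every planar graph admits a book embedding with at most four pages. Consequently the supremum of the book thicknesses over all planar graphs is at most four, so the book thickness of the class of planar graphs is $\leq 4$. I would state this as a single line citing the result, with no additional construction needed, since the excerpt already attributes this bound to~\cite{DBLP:journals/jcss/Yannakakis89}.

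For the lower direction, I would apply Theorem~\ref{thm:main}, which provides a planar graph $G$ that does not admit any $3$-page book embedding. By definition, the book thickness of $G$ is strictly greater than three, and since book thickness is integer-valued, it is at least four. Hence the supremum over all planar graphs is at least $4$, giving the matching lower bound.

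Combining these two inequalities yields the exact value $4$, and I would close the proof in a single sentence noting that the two bounds coincide. There is no genuine obstacle here: once Theorem~\ref{thm:main} and Yannakakis' upper bound are in hand, the corollary is immediate. The only subtlety worth making explicit is that book thickness is integer-valued, so ``not embeddable in three pages'' together with ``embeddable in four pages'' pins the value down exactly, rather than leaving it somewhere in the interval $(3,4]$.
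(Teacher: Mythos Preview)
Your proposal is correct and mirrors the paper's own justification: the corollary is stated immediately after Theorem~\ref{thm:main} with the one-line remark that it follows by combining Theorem~\ref{thm:main} with Yannakakis' upper bound of four. Your write-up merely makes explicit the integrality observation, which is fine but not something the paper bothers to spell out.
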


\noindent We provide two proofs of Theorem~\ref{thm:main}. The first one is
combinatorial (with some computer-aided prerequisites) and regards a
significantly large planar graph. After recalling basic notions and results on
book embeddings in Section~\ref{sec:preliminaries}, we describe the construction
of this graph in Section~\ref{sec:graph}, where we also present two properties
of a particular subgraph of it, which have been verified by a computer (refer to
Facts~\ref{fact:1} and~\ref{fact:2}). In Section~\ref{sec:combinatorial}, we
prove that the graph presented in Section~\ref{sec:graph} does not admit a
$3$-page book embedding. We give the main ingredients of this proof in
Section~\ref{ssec:idea}, while in Section~\ref{ssec:cases} we investigate a
systematic analysis of cases of different underlying linear orders to conclude
our main result.

The second proof of Theorem~\ref{thm:main} is purely computer-aided; see
Section~\ref{sec:sat}. With two independent implementations~\cite{alice,bob} of
the SAT formulation presented in~\cite{DBLP:conf/gd/Bekos0Z15}, we confirm that
a particular maximal planar graph with $275$ vertices does not admit a $3$-page
book embedding; see Fig.~\ref{fig:needs4pages} for an illustration of the graph.
Key in our approach is the introduction of several symmetry-breaking constraints
in the SAT instance. These constraints helped in reducing the search space of
possible satisfying assignments and made the instance verifiable using modern
SAT solvers. We conclude in Section~\ref{sec:conclusions} with several open
problems.

\section{Preliminaries}
\label{sec:preliminaries}

A \df{vertex ordering} $\prec$ of a simple undirected graph $G=(V, E)$ is a
total order of its vertex set $V$, such that for any two vertices $u$ and $v$,
$u \prec v$ if and only if $u$ precedes $v$ in the order. Two vertices $u$ and
$v$ are said to be \df{on opposite sides} of an edge $(x,y)$, where $u \prec v$
and $x \prec y$, if $u \prec x \prec v \prec y$ or $x \prec u \prec y \prec v$.
Otherwise, $u$ and $v$ are \df{on the same side} of $(x,y)$. We write $[v_1,
v_2, \ldots, v_k ]$ to denote $v_i \prec v_{i+1}$ for all $1 \leq i < k$. Let 
$F$ be a set of $k \geq 2$ independent pairs of vertices $\langle s_i, t_i
\rangle$, that is, $F=\{\langle s_i,t_i \rangle;\;i=1,2,\ldots,k\}$. Assume
without loss of generality that $s_i \prec t_i$, for all $1 \leq i \leq k$. If
the order is $[s_1, \ldots, s_k, t_k, \ldots, t_1]$, then we say that the pairs
of $F$ form a \df{$k$-rainbow}, while if the order is $[s_1, t_1, \ldots, s_k,
t_k]$, then the pairs of $F$ form a \df{$k$-necklace}. The pairs of $F$ form a
\df{$k$-twist} if the order is $[s_1, \ldots, s_k, t_1, \ldots, s_k]$. Note that
since each edge is defined by a pair of vertices, the three definitions are
directly extendable to independent edges; see Fig.~\ref{fig:necklace-twist}. For
this case, two independent edges that form a $2$-twist (respectively,
$2$-rainbow, $2$-necklace) are commonly referred to as \df{crossing}
(respectively, \df{nested}, \df{disjoint}).

A \df{$k$-page book embedding} of a graph is a pair $\mathcal{E} = (\prec,
\{E_1, \dots, E_k\})$, where $\prec$ is a vertex ordering of $G$ and $\{E_1,
\dots, E_k\}$ is a partition of $E$ into \df{pages}, that is, sets of pairwise
non-crossing edges. Equivalently, a $k$-page book embedding is a vertex ordering
and a $k$-edge-coloring such that no two edges of the same color cross with
respect to the ordering. The \df{book thickness} of a graph $G$ is the minimum
$k$ such that $G$ admits a $k$-page book embedding. As noted in several papers,
a $k$-page book embedding, $\mathcal{E}$, can be transformed into a circular
embedding, $C(\mathcal{E})$, with a $k$-edge-coloring in which all vertices
appear in the circumference of a circle in the same order as in $\mathcal{E}$
and the edges are drawn as straight-line segments in the interior of the circle,
such that no two edges of the same color cross, and vice
versa~\cite{DBLP:journals/jct/BernhartK79,DBLP:conf/focs/Heath84}; see
Fig.~\ref{fig:AxxByy}. The next lemma, whose proof is immediate, provides
sufficient conditions for the non-existence of a $3$-page book embedding.

\begin{figure}[t!]
	\centering	
	\begin{subfigure}[b]{.32\textwidth}
		\centering
		\includegraphics[scale=1,page=1]{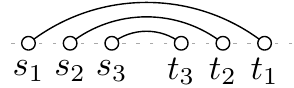}
		\caption{}
		\label{fig:rainbow}
	\end{subfigure}
	\hfil
	\begin{subfigure}[b]{.32\textwidth}
		\centering
		\includegraphics[scale=1,page=2]{preliminaries}
		\caption{}
		\label{fig:twist}
	\end{subfigure}
	\hfil
	\begin{subfigure}[b]{.32\textwidth}
		\centering
		\includegraphics[scale=1,page=3]{preliminaries}
		\caption{}
		\label{fig:necklace}
	\end{subfigure}
	\caption{Illustration of three edges that form:
		(a)~a $3$-rainbow, 
		(b)~a $3$-twist, and 
		(c)~a $3$-necklace.}
	\label{fig:necklace-twist}
\end{figure}

\begin{lemma}\label{lem:no3}
A $3$-page book embedding of a graph does not contain:
\begin{inparaenum}[(i)]
\item \label{no3:twist} four edges that form a $4$-twist in~$\prec$, 
\item \label{no3:cross2} a pair of crossing edges that both cross two edges assigned to two different pages, and
\item \label{no3:cross3} an edge crossing three edges assigned to three different pages.
\end{inparaenum}
\end{lemma}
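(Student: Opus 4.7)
The plan is to dispatch each of the three parts by a short pigeonhole argument on the three-colored edge partition, exploiting the fact that edges assigned to the same page must be pairwise non-crossing. In each case I would assume, towards a contradiction, that a $3$-page book embedding $\mathcal{E}=(\prec,\{E_1,E_2,E_3\})$ of $G$ exists and contains the forbidden configuration.

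For (i), I would note that a $4$-twist consists of four edges that are pairwise crossing (any two of them form a $2$-twist). Distributing these four edges among only three pages forces, by pigeonhole, two of them onto the same page $E_i$; but those two edges cross, contradicting the definition of a page. For (iii), if an edge $e$ crosses edges $f_1,f_2,f_3$ assigned to the three pairwise distinct pages $E_1,E_2,E_3$, then $e$ cannot be placed on any of $E_1,E_2,E_3$, and there is no fourth page available; contradiction.

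For (ii), I would first read the statement carefully: the two crossing edges $e,e'$ admit a common pair of witness edges $f_1,f_2$ on two different pages $E_i$ and $E_j$, each of which is crossed by both $e$ and $e'$. Since $e$ crosses $f_1\in E_i$ and $f_2\in E_j$, edge $e$ lies on neither $E_i$ nor $E_j$, and hence sits on the unique third page $E_k$. The identical argument pins $e'$ to $E_k$ as well. But $e$ and $e'$ cross each other, and this contradicts the fact that $E_k$ is a page.

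There is no real obstacle in the proof; the only point that warrants attention is the correct parsing of (ii), namely that $f_1,f_2$ are a \emph{shared} pair of witness edges for both $e$ and $e'$. Without this common-witness interpretation, the two crossing edges could individually be forced onto different remaining pages and the argument would collapse. Once that is fixed, each of the three statements reduces to a one-line counting argument on the three colors.
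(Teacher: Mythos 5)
Your proof is correct and is exactly the pigeonhole argument the paper has in mind: the paper gives no details, declaring the lemma's proof ``immediate,'' and your elaboration of (i)--(iii) is the intended one. Your careful parsing of (ii) -- that the two crossing edges share the pair of differently-paged witness edges (or at least the same pair of pages), forcing both onto the unique remaining page -- is indeed the right reading, since without it the statement would fail.
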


\noindent The next result by Erd\H{o}s and Szekeres~\cite{CM_1935__2__463_0} is used to simplify our case analysis.

\begin{lemma}[Erd\H{o}s and Szekeres~\cite{CM_1935__2__463_0}]\label{lem:es}
Given $a, b \in \mathbb{N}$, every sequence of distinct real numbers of length at least $a\cdot b + 1$ 
contains a monotonically increasing subsequence of length $a+1$ or 
a monotonically decreasing subsequence of length $b+1$.
\end{lemma}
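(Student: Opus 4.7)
The plan is to prove Lemma~\ref{lem:es} by the standard pigeonhole argument, which is usually attributed to Seidenberg. Suppose for contradiction that a sequence $x_1, x_2, \ldots, x_n$ of $n \geq ab+1$ distinct reals contains neither a monotonically increasing subsequence of length $a+1$ nor a monotonically decreasing subsequence of length $b+1$.

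For each index $i \in \{1, \ldots, n\}$, I would associate the pair $(\ell_i, d_i)$, where $\ell_i$ denotes the length of the longest monotonically increasing subsequence ending at position $i$, and $d_i$ denotes the length of the longest monotonically decreasing subsequence ending at position $i$. By the contrapositive of what we want to prove, every $\ell_i$ lies in $\{1, 2, \ldots, a\}$ and every $d_i$ lies in $\{1, 2, \ldots, b\}$, so the pair $(\ell_i, d_i)$ takes at most $ab$ distinct values.

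The key step is to show that the map $i \mapsto (\ell_i, d_i)$ is injective. For any two indices $i < j$, the values $x_i$ and $x_j$ are distinct (by hypothesis), so either $x_i < x_j$ or $x_i > x_j$. In the first case, appending $x_j$ to any increasing subsequence ending at $x_i$ yields a strictly longer one ending at $x_j$, hence $\ell_j \geq \ell_i + 1$. In the second case, the analogous argument for decreasing subsequences gives $d_j \geq d_i + 1$. Either way, $(\ell_i, d_i) \neq (\ell_j, d_j)$.

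Combining the two observations, the $n \geq ab+1$ indices are mapped injectively into a set of size at most $ab$, which is impossible by pigeonhole. This contradiction completes the proof. There is no real obstacle here: the only subtlety is carefully checking the strict inequality in the extension step, which is precisely where the distinctness assumption on the $x_i$ is used.
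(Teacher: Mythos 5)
Your proof is correct: the pigeonhole argument via the pairs $(\ell_i, d_i)$, with injectivity forced by the distinctness of the terms, is a complete and standard proof of the Erd\H{o}s--Szekeres lemma. The paper itself does not prove this statement but simply cites it, so there is no conflict with the paper's treatment; your argument (often attributed to Seidenberg) is a valid, self-contained justification of the cited result.
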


Lemma~\ref{lem:es} implies that, for every $r \ge 1$, if the input graph has
sufficiently many independent edges, then one can always find $r$ of them that
form an $r$-rainbow or an $r$-twist or an $r$-necklace in every ordering
$\prec$. To see this, assume that the graph contains $r^3$ independent edges.
Represent each edge connecting the $i$-th with the $j$-th vertex in $\prec$ by a
pair $(i, j)$ with $i < j$. Consider the pairs sorted by the first coordinates,
and apply Lemma~\ref{lem:es} with $a=r^2$ and $b=r-1$ to the second coordinates
of the edges. Then, either
\begin{inparaenum}[(i)]
\item \label{c:inc} there exists $r^2+1$ edges such that every pair of them forms a $2$-twist or a $2$-necklace (corresponding to an increasing subsequence), which implies that $r$ of them  form an $r$-twist or an $r$-necklace~\cite{DBLP:conf/gd/AlamBG0P18}, or
\item \label{c:dec} there exists an $r$-rainbow (corresponding to a decreasing subsequence).
\end{inparaenum}
Note that the same argument can be applied to $r^3$ designated pairs of vertices (not necessarily connected by an edge); thus we have the following corollary.

\begin{corollary}\label{lem:pair-conf}
	For every vertex ordering, $\prec$, of a graph with $r^3$ designated pairs of vertices, one can identify $r$ pairs 
that form either an $r$-rainbow or an $r$-twist or an $r$-necklace in $\prec$.
\end{corollary}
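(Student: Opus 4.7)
The plan is to replay essentially verbatim the Erd\H{o}s--Szekeres argument that appears in the paragraph immediately preceding the corollary, observing that nothing in that argument relies on the pairs being edges of $G$: only the positions of their endpoints in $\prec$ are used. We may assume the designated pairs are independent (no shared endpoints), as otherwise the terms $r$-rainbow, $r$-twist, $r$-necklace are not even defined for them.

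Concretely, I would first assume without loss of generality that $s_i \prec t_i$ for each pair, and index the pairs so that $s_1 \prec s_2 \prec \cdots \prec s_{r^3}$ in $\prec$. Identifying each $t_i$ with its rank in $\prec$ produces a sequence of $r^3$ distinct reals, to which I would apply Lemma~\ref{lem:es} with $a = r^2$ and $b = r-1$. Since $a\cdot b + 1 = r^3 - r^2 + 1 \le r^3$, the sequence contains either a monotonically decreasing subsequence of length $b+1 = r$ or a monotonically increasing subsequence of length $a+1 = r^2+1$. In the decreasing case, the $r$ chosen pairs satisfy $s_{i_1} \prec \cdots \prec s_{i_r}$ and $t_{i_1} \succ \cdots \succ t_{i_r}$, which is precisely an $r$-rainbow, and we are done.

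In the increasing case we have $r^2+1$ pairs whose $s$- and $t$-coordinates both increase along $\prec$. Any two such pairs $(s_i,t_i)$ and $(s_j,t_j)$ with $i<j$ cannot be nested (since $t_i \prec t_j$), so they satisfy either $t_i \prec s_j$ (disjoint intervals, i.e., a $2$-necklace) or $s_j \prec t_i$ (overlapping intervals, i.e., a $2$-twist). I would then consider the partial order in which $(s_i,t_i)$ precedes $(s_j,t_j)$ iff $t_i \prec s_j$; chains are $k$-necklaces and antichains are $k$-twists. By Dilworth's theorem, either an antichain of size $r$ exists---yielding the desired $r$-twist---or the poset decomposes into at most $r-1$ chains, in which case pigeonhole produces a chain of size at least $\lceil (r^2+1)/(r-1) \rceil \ge r$, yielding the desired $r$-necklace.

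The main obstacle is this final Dilworth-style extraction: every earlier step is bookkeeping, whereas turning $r^2+1$ pairwise-twisting-or-disjoint intervals into a monochromatic $r$-sized subfamily is the only genuinely combinatorial ingredient. It is exactly the content that the paper itself outsources with the citation to~\cite{DBLP:conf/gd/AlamBG0P18}, so a clean presentation of this step---via Dilworth as above, or equivalently via a second application of Lemma~\ref{lem:es} to an auxiliary sequence encoding the interval interleavings---is all that is required beyond what is already written in the preceding paragraph.
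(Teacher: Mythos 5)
Your proposal is correct and follows essentially the same route as the paper, whose proof of the corollary is simply the observation that the Erd\H{o}s--Szekeres computation (Lemma~\ref{lem:es} with $a=r^2$, $b=r-1$) from the preceding paragraph never uses that the pairs are edges. The only difference is that you make the final step self-contained---extracting an $r$-twist or $r$-necklace from the $r^2+1$ pairwise crossing-or-disjoint pairs via Dilworth's theorem---where the paper outsources this to the citation~\cite{DBLP:conf/gd/AlamBG0P18}; your Dilworth argument is sound.
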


\section{The Basic Graph Structure}
\label{sec:graph}

The graph used to prove Theorem~\ref{thm:main} is built using a sequence of
gadgets---planar graphs that do not admit a $3$-page book embedding under
certain conditions. To define a gadget, denoted by $Q_k$, we recall the
operation of the \df{stellation of a face $f$}, that is, the addition of a
vertex in the interior of $f$ connected to all vertices delimiting $f$.
Accordingly, the operation of stellating a plane graph consists of stellating
all its bounded faces.

For $k \geq 2$, graph $Q_k$ is a plane graph, which contains as a subgraph the
complete bipartite graph $K_{2,k}$ with bipartitions $\{A,B\}$ and
$\{t_0,\ldots,t_{k-1}\}$; see Fig.~\ref{fig:K2k}. We choose the embedding of
$Q_k$ such that the faces of $K_{2,k}$ are $F_i=\langle A, t_i, B, t_{i+1}
\rangle$ for $i=0,\ldots,k-1$ (indices taken modulo $k$) with $F_{k-1}$ being
its outerface. We refer to vertices $A$ and $B$ as the \df{poles} of $Q_k$, and
to the vertices $t_0,\ldots,t_{k-1}$ as the \df{terminals} of $Q_k$. For $i =
0,\ldots,k-2$, we call terminals $t_i$ and $t_{i+1}$ of $Q_k$ \df{consecutive};
notice that $t_0$ and $t_{k-1}$ are not consecutive by the definition.

\begin{figure}[t]
	\centering
	\begin{subfigure}[b]{.48\textwidth}
	\centering
	\includegraphics[scale=.9,page=2]{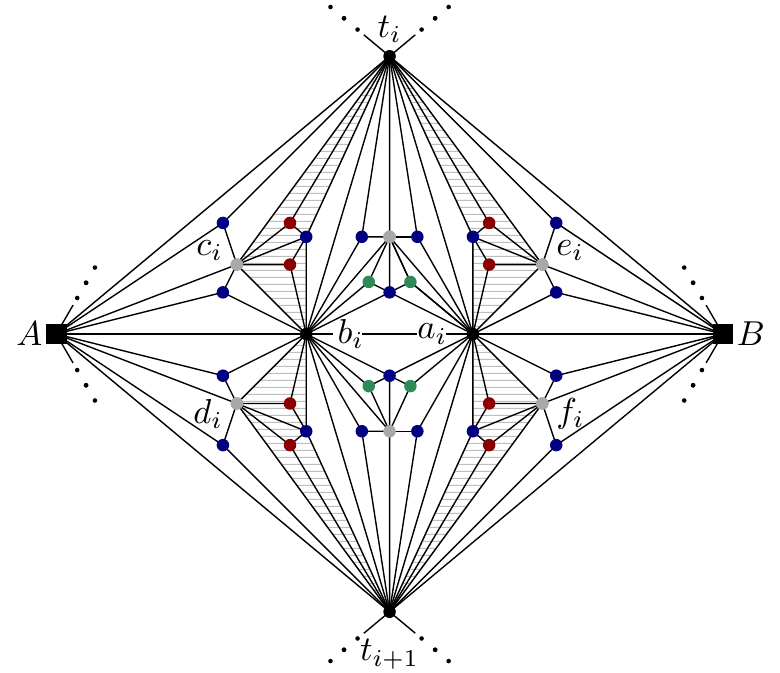}
	\caption{}
	\label{fig:K2k}
	\end{subfigure}	
	\hfil
	\begin{subfigure}[b]{.48\textwidth}
	\centering
	\includegraphics[scale=.9,page=1]{pics/Q}
	\caption{}
	\label{fig:qi}
	\end{subfigure}
	\caption{Illustration for the construction of graph $Q_k$.}
	\label{fig:qk}
\end{figure}

Let $i \in \{0,\ldots,k-2\}$. In $Q_k$, vertices $A$ and $B$ are connected by a
path of length~3 which is embedded within $F_i$ and consists of the following
three edges: $(A,b_i)$, $(b_i,a_i)$ and $(a_i,B)$; see Fig.~\ref{fig:qi}. We
refer to the two vertices $a_i$ and $b_i$ of this path as the \df{satellites} of
the (consecutive) terminals $t_i$ and $t_{i+1}$; accordingly, we refer to the
edge connecting $a_i$ and $b_i$ as the \df{satellite edge} of $t_i$ and
$t_{i+1}$. Observe that we do not embed any path in $F_{k-1}$. The two faces on
the opposite sides of the path embedded in $F_i$ are triangulated by the edges
$(t_i,a_i), (t_i,b_i)$ as well as $(t_{i+1},a_i)$ and $(t_{i+1},b_i)$. We
proceed by stellating the graph constructed so far twice (refer to the gray and
blue vertices in Fig.~\ref{fig:qi}, respectively). 
Let $c_i$, $d_i$, $e_i$ and $f_i$ be the vertices that stellated $\langle A,
t_i, b_i \rangle$, $\langle A, b_i, t_{i+1} \rangle$, $\langle B, t_i, a_i
\rangle$ and $\langle B, a_i, t_{i+1} \rangle$ in the first round of stellation.
Let $c_i'$, $d_i'$, $e_i'$ and $f_i'$ be the vertices that stellated $\langle
c_i, t_i, b_i \rangle$, $\langle d_i, b_i, t_{i+1} \rangle$, $\langle e_i, t_i,
a_i \rangle$ and $\langle f_i, a_i, t_{i+1} \rangle$ in the second round of
stellation; refer to the blue-colored vertices that lie within the gray-shaded
regions of Fig.~\ref{fig:qi}. We proceed by stellating faces $\langle c_i, c_i',
t_i \rangle$, $\langle c_i, c'_i, b_i \rangle$, $\langle d_i, d_i', b_i
\rangle$, $\langle d_i, d_i', t_{i+1} \rangle$, $\langle e_i, e_i', t_i
\rangle$, $\langle e_i, e_i', a_i \rangle$, $\langle f_i, f_i', a_i \rangle$ and
$\langle f_i, f_i', t_{i+1} \rangle$; refer to the red-colored vertices of
Fig.~\ref{fig:qi}. The satellite edge $(a_i,b_i)$ delimits two faces, each of
which is neighboring two other faces that we stellate; refer to the
green-colored vertices of Fig.~\ref{fig:qi}. Edge $(A,B)$ completes the
construction of $Q_k$. 
Note that graph $Q_2$, the first member in the described family, consists of
$42$ vertices and $126$~edges.

The following two facts that hold for certain members of the constructed graph
family have been verified by a computer using the SAT-formulation proposed
in~\cite{DBLP:conf/gd/Bekos0Z15}; we provide further details in
Section~\ref{sec:sat}. We use these facts in the combinatorial proof of
Theorem~\ref{thm:main}.

\begin{fact}\label{fact:1}
Graph $Q_k$ with $k \geq 7$ does not admit an embedding in a book with three pages, 
$\BS$, $\RS$ and $\GS$, under the following restrictions:
\begin{inparaenum}[(i)]
\item \label{fact1:i} the poles $A$ and $B$ are consecutive in the ordering, 
\item \label{fact1:ii} all edges from $A$ to the terminals of $Q_k$ belong to $\BS$, and
\item \label{fact1:iii} all edges from $B$ to the terminals of $Q_k$ belong to $\RS$ or $\GS$.
\end{inparaenum}
\end{fact}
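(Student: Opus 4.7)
The plan is to assume, for contradiction, that $Q_k$ admits a $3$-page embedding $(\prec, \{\BS, \RS, \GS\})$ satisfying (i)-(iii), and to exploit both the rigidity of the highly stellated gadget and the absence of the forbidden patterns of Lemma~\ref{lem:no3}. Since $A$ and $B$ are consecutive on the spine, every other vertex lies either in a left region $L$ or a right region $R$ relative to $\{A,B\}$. Each face $F_i=\langle A, t_i, B, t_{i+1}\rangle$ of the underlying $K_{2,k}$ contains the satellite path $A - b_i - a_i - B$, the four triangulating edges from $\{t_i, t_{i+1}\}$ to $\{a_i, b_i\}$, and a dense bouquet of stellation vertices whose relative positions on the spine are still free.

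The first step I would carry out is a \emph{side-constraint lemma}: for each $i$, the satellites $a_i, b_i$ together with the terminals $t_i, t_{i+1}$ must all lie on the same side of $\{A,B\}$. Otherwise some edge of the satellite path has to cross the tiny $\{A,B\}$ gap, and together with the blue edges $\{At_j\}$ and the red/green edges $\{Bt_j\}$ running to other terminals this produces a $4$-twist or a double-crossing configuration forbidden by Lemma~\ref{lem:no3}.

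Next, I would use the stellation vertices to force, for each face $F_i$, a specific colouring of its seven satellite-related edges $(A,b_i), (b_i,a_i), (a_i,B), (t_i,a_i), (t_i,b_i), (t_{i+1},a_i), (t_{i+1},b_i)$. Each of the eight stellation vertices $c_i, c'_i, d_i, d'_i, e_i, e'_i, f_i, f'_i$ and the further stellation vertices placed near $(a_i,b_i)$ is embedded inside a small triangle two of whose bounding edges are already constrained by (ii), (iii), or by the side-constraint. Applying Lemma~\ref{lem:no3} locally to each such sub-configuration should reduce the colours of the seven edges above to a constant number of local ``types'' depending only on the relative order of $a_i, b_i, t_i, t_{i+1}, A, B$. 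With $k-1 \geq 6$ independent faces at hand, Corollary~\ref{lem:pair-conf} applied to pairs of vertices drawn from distinct faces would then locate two faces $F_i, F_j$ of the same local type whose vertices interleave in a pattern that produces either a $4$-twist or a pair of crossing edges each crossing two differently-coloured edges, once again contradicting Lemma~\ref{lem:no3}.

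The main obstacle, and the reason the authors delegate the verification to a SAT solver (see Section~\ref{sec:sat}), is the colour-forcing step: the case analysis is long because the many stellation vertices can appear in a variety of relative orders on the spine, producing dozens of sub-cases. A clean by-hand proof would require organising these sub-cases into a finite decision tree (independent of $k$ but technically delicate), and it is not \emph{a priori} clear that every branch closes via Lemma~\ref{lem:no3} alone, so some branches may demand an ad-hoc argument tailored to the specific stellation vertex involved.
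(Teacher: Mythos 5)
Your proposal is a plan rather than a proof, and it has genuine gaps. Note first that the paper does not prove Fact~\ref{fact:1} combinatorially at all: it is verified purely by computer, by feeding the SAT formulation of~\cite{DBLP:conf/gd/Bekos0Z15} for $Q_7$ augmented with restrictions (i)--(iii) to two independent solver implementations (Section~\ref{sec:sat}); so there is no hand argument to align with, and a correct replacement would have to be a complete case analysis, not a sketch. Within your sketch, the two pillars are not established. The ``side-constraint lemma'' is only asserted, and its justification starts from a confusion: since $A$ and $B$ are consecutive, no edge of $Q_k$ can \emph{cross} $(A,B)$ at all --- an edge joining a vertex before $A$ to a vertex after $B$ merely nests over it --- so the claimed $4$-twists or double-crossing configurations do not arise automatically from ``crossing the $\{A,B\}$ gap'' and would require a genuine argument combining the $\BS$ edges at $A$ with the $\RS$/$\GS$ edges at $B$. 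The colour-forcing step is precisely the part you concede may not close in every branch via Lemma~\ref{lem:no3}, so the asserted reduction to finitely many ``local types'' is hypothetical.

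The quantitative finish also does not work. Fact~\ref{fact:1} is stated for $k\ge 7$, and the paper stresses that $Q_7$ is \emph{minimal} in the family with this property, i.e.\ $Q_6$ does admit a $3$-page embedding under the very same restrictions; any correct proof must therefore exploit the threshold sharply. Your plan has only $k-1\ge 6$ inner faces to work with: a pigeonhole over ``a constant number of local types'' (which you yourself expect to run into dozens of sub-cases) cannot extract two faces of the same type from six faces, and Corollary~\ref{lem:pair-conf} is far more demanding still, needing $r^3$ designated pairs to produce $r$ pairs in a common pattern. So even granting the local colour-forcing, the global step as stated fails exactly in the range of $k$ covered by the statement.
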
	

\begin{fact}\label{fact:2}
Graph $Q_k$ with $k \geq 10$ does not admit an embedding in a book with three pages, 
$\BS$, $\RS$ and $\GS$, under the following restrictions:
\begin{inparaenum}[(i)]
\item \label{fact2:i} all terminals of $Q_k$ are on the same side of $(A,B)$,
\item \label{fact2:ii} all edges from $A$ to the terminals of $Q_k$ belong to $\BS$, and
\item \label{fact2:iii} all edges from $B$ to the terminals of $Q_k$ belong to $\RS$.
\end{inparaenum}
\end{fact}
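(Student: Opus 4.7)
My plan is to exploit the rigid triangulated structure of $Q_k$ and the fact that $k \geq 10$ together force a forbidden configuration in the sense of Lemma~\ref{lem:no3}. First, I would normalize the ordering: because all terminals lie on the same side of $(A,B)$, by reflecting the spine and cyclically rotating the equivalent circular embedding $C(\mathcal{E})$, I may assume $A \prec t_{\pi(0)} \prec \cdots \prec t_{\pi(k-1)} \prec B$ for some permutation $\pi$ of $\{0,\ldots,k-1\}$. Restrictions~(\ref{fact2:ii})--(\ref{fact2:iii}) then pin down the colors of all $2k$ ``spokes'' from $A$ (all $\BS$) and from $B$ (all $\RS$), so only $\GS$ is a priori free for the dense tangle of satellite, triangulation, and stellation edges packed into the faces $F_i = \langle A, t_i, B, t_{i+1}\rangle$.

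Second, I would carry out a local analysis inside a single face $F_i$. The path $A$--$b_i$--$a_i$--$B$, the triangulation edges $(t_i, a_i), (t_i, b_i), (t_{i+1}, a_i), (t_{i+1}, b_i)$, the first-level stellation vertices $c_i, d_i, e_i, f_i$, their further stellations $c'_i, d'_i, e'_i, f'_i$, and the final round of stellations, together produce a subgraph on roughly forty vertices whose internal crossings are essentially rigid. Using Lemma~\ref{lem:no3}(\ref{no3:twist})--(\ref{no3:cross3}) repeatedly, I would argue that for each consecutive pair $(t_i,t_{i+1})$, the page of the satellite edge $(a_i,b_i)$ and the pages of the four triangle edges around it are pinned down by the relative ordering of $A, B, t_i, t_{i+1}, a_i, b_i$ into a handful of local ``types''.

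Third, I would apply Corollary~\ref{lem:pair-conf} to the $k-1 \geq 9$ consecutive terminal pairs $(t_i,t_{i+1})$ to extract several pairs realising the same Erd\H{o}s--Szekeres pattern (an $r$-rainbow, an $r$-twist, or an $r$-necklace). Aggregating the local constraints of Step~2 across the selected faces then yields a long-range configuration: for example, in a rainbow of pairs the satellite edges $(a_i, b_i)$ would themselves nest in a $\GS$-rainbow surrounded by the $\BS$-fan from $A$ and the $\RS$-fan from $B$, and a short case check then exhibits either a $4$-twist (violating Lemma~\ref{lem:no3}(\ref{no3:twist})), two crossing edges each crossing two edges on distinct pages (violating~(\ref{no3:cross2})), or an edge crossing three edges on three distinct pages (violating~(\ref{no3:cross3})).

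The main obstacle, and the reason I expect a hand proof to be impractical, is the combinatorial explosion in Step~2: each face $F_i$ contains tens of vertices whose page assignments interact subtly, and restriction~(\ref{fact2:iii}) is strictly tighter than its counterpart in Fact~\ref{fact:1} (it leaves no ``$\RS$ or $\GS$'' slack), so the local types proliferate rather than cleanly bifurcate. Consequently my concrete route would be to encode the restricted $3$-page embedding problem as a SAT instance following~\cite{DBLP:conf/gd/Bekos0Z15}, hard-code restrictions~(\ref{fact2:i})--(\ref{fact2:iii}), add symmetry-breaking that fixes the cyclic rotation, the reflection, and the labels of $A$, $B$, and $t_0$, and verify unsatisfiability already for $k = 10$. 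The combinatorial sketch above is what justifies the symmetry reductions and provides human confidence in the outcome of the solver.
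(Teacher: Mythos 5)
Your concrete route coincides with the paper's: Fact~2 is established there purely by computer, encoding the $3$-page embedding problem for $Q_{10}$ with restrictions (i)--(iii) hard-coded into the SAT formulation of~\cite{DBLP:conf/gd/Bekos0Z15} and verifying unsatisfiability (the extra constraints make these instances solvable in minutes), with the case $k\geq 10$ following since $Q_{10}$ sits inside $Q_k$ with the restrictions inherited. Your combinatorial sketch in Steps~1--3 is only motivational and is not carried out in the paper either, so the proposals are essentially the same.
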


\noindent Note that Fact~\ref{fact:1} imposes stronger restrictions in the
vertex ordering than Fact~\ref{fact:2}, while Fact~\ref{fact:2} imposes stronger
restrictions to the edges adjacent to $A$ and $B$. In the remainder, we denote
by $Q$ the smallest member of the constructed family of graphs for which both
Facts~\ref{fact:1} and~\ref{fact:2} hold:
\[
Q:=Q_{10}
\]

Consider a plane graph $G$ and let $H$ be a plane graph with two designated
vertices $A$ and $B$ that appear consecutively along its outerface. The
operation of \df{attaching} $H$ along an edge $(u,v)$ of $G$ consists of
removing $(u,v)$ from $G$ and of introducing $H$ into $G$ by identifying vertex
$A$ of $H$ with vertex $u$ of $G$ and vertex $B$ of $H$ with vertex $v$ of $G$;
see Fig.~\ref{fig:attachments}. The obtained graph is clearly planar, since both
$G$ and $H$ are planar and simple due to removal of $(u,v)$ from $G$.

\begin{figure}[t]
	\centering
	\includegraphics[scale=1,page=3]{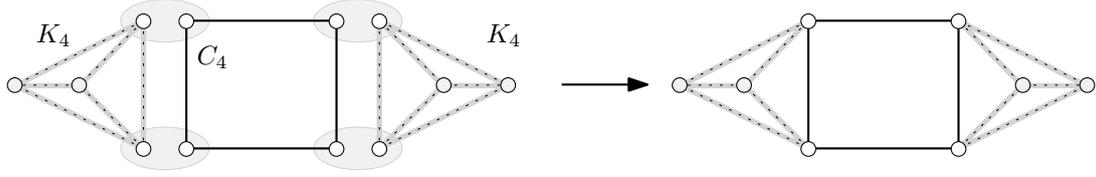}
	\caption{Attaching two copies of the complete graph $K_4$ along two edges of a $4$-cycle $C_4$.}
	\label{fig:attachments}
\end{figure}

\section{A Combinatorial Proof with Computer-Aided Prerequisites}
\label{sec:combinatorial}

In this section, we construct a planar graph $G$ containing several copies of
$Q$. Using as prerequisites Facts~\ref{fact:1} and~\ref{fact:2}, we explore
certain properties of graph $G$ (Section~\ref{ssec:idea}) to prove that it does
not admit a $3$-page book embedding by analyzing possible vertex orderings
(Section~\ref{ssec:cases}).

\subsection{The Idea}
\label{ssec:idea}

We prove Theorem~\ref{thm:main} by contradiction, that is, by assuming that $G$
admits a book embedding $\mathcal E$ with three pages denoted by $\BS$, $\RS$,
and $\GS$. Graph $G$ contains as a subgraph a \df{base graph}, which we denote
by $G_N$, consisting of a large number $N \gg 1$ of copies of graph $Q$ that
share the same pair of poles, $A$ and $B$, and edge $(A, B)$. Hence, graph $G_N$
is symmetric with respect to $A$ and $B$. %
Let $n_Q$ and $m_Q$ be the number of vertices and edges in $Q$, and let $b_Q$ be
the number of $3$-page book embeddings of graph $Q$. Clearly $b_Q$ is upper
bounded by $3^{m_Q} \cdot n_Q!$; it follows that if $N$ is at least $\kappa
\cdot 3^{m_Q} \cdot n_Q!$, then by pigeonhole principle $G_N$ contains $\kappa$
copies of graph $Q$ with the \df{majority property}, that is, corresponding
vertices of $Q$ in each of these $\kappa$ copies appear in the same relative
order in $\mathcal{E}$, and additionally the edges that connect these vertices
in each of the copies are assigned to the same pages. We refer to two vertices
that correspond to the same vertex in $Q$ and that belong to different copies
satisfying the majority property as \df{twin vertices}. Accordingly, two edges
connecting twin vertices are called \df{twin edges}.

\begin{lemma}
	\label{lm:twin-edges}
	A pair of independent twin edges either form a $2$-rainbow or a $2$-necklace in $\mathcal{E}$.
\end{lemma}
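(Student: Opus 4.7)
The plan is to derive the lemma directly from the majority property together with the defining feature of a book embedding that no two edges sharing a page may cross.

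First I would unpack the hypothesis. By the majority property of the $\kappa$ copies of $Q$ inside $G_N$, twin edges—that is, edges of two different copies that correspond to the same edge of $Q$—are assigned to the same page of the hypothetical $3$-page book embedding $\mathcal{E}$. So, given two independent twin edges $e_1$ and $e_2$ that both correspond to some edge $(u,v)$ of $Q$, they lie on a common page $P\in\{\BS,\RS,\GS\}$. By the very definition of a $k$-page book embedding, no two edges on the same page cross in $\prec$, so $e_1$ and $e_2$ do not cross. In the terminology of Section~\ref{sec:preliminaries}, this says that $e_1$ and $e_2$ do not form a $2$-twist.

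To conclude, I would invoke the trichotomy recalled in Section~\ref{sec:preliminaries}: any two independent edges of a graph, in any vertex ordering, realize exactly one of the patterns $2$-rainbow (nested), $2$-twist (crossing), or $2$-necklace (disjoint). Having ruled out the $2$-twist option, only the $2$-rainbow and $2$-necklace patterns remain, which is precisely the claim of the lemma.

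I do not foresee any real obstacle—the statement is essentially a bookkeeping consequence of the majority property plus the non-crossing requirement on pages. The only subtlety worth flagging is why the hypothesis that the twin edges be \emph{independent} is necessary: if the underlying edge of $Q$ is incident to a pole $A$ or $B$, then the two copies of that edge already share the pole as a common endpoint (since all copies of $Q$ in $G_N$ share both $A$ and $B$), and so such pairs are not independent and fall outside the scope of the lemma.
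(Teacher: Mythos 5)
Your proposal is correct and matches the paper's own argument: the majority property places independent twin edges on the same page, so they cannot cross (form a $2$-twist), and the trichotomy of relative positions of independent edges leaves only the $2$-rainbow and $2$-necklace options. Your extra remark about why independence is needed (copies sharing the poles $A$ and $B$) is a sensible clarification but not a deviation from the paper's proof.
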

\begin{proof}
	Observe that two independent twin edges cannot form a $2$-twist, as
	they are assigned to the same page in $\mathcal{E}$ by the majority property.
\end{proof}

Next we further increase $N$ to guarantee an additional property, called the
\df{monotonic property}, for the $\kappa$ copies of graph $Q$ that comply with
the majority property. Denote  by $p_Q$ the number of pairs of vertices in $Q$,
that is, $p_Q = \frac{n_Q(n_Q-1)}{2}$. By Corollary~\ref{lem:pair-conf}, if $N$
is at least $\kappa^{3 \cdot p_Q} \cdot 3^{m_Q} \cdot n_Q!$,
then one can identify $\kappa$ copies of $Q$ in $G_N$ complying with the
majority property, such that, for each pair of vertices of $Q$, the
corresponding pairs of vertices in these $\kappa$ copies form a $\kappa$-rainbow
or $\kappa$-twist or a $\kappa$-necklace in $\mathcal{E}$. We specify $\kappa$
in the case analysis of Section~\ref{ssec:cases}.

While we mainly focus on the base graph $G_N$, to facilitate our analysis in
cases in Section~\ref{ssec:cases}, we perform an augmentation step that
completes the construction of $G$. Let $H_N$ be a copy of the base graph $G_N$.
We attach a copy of $H_N$ along every satellite edge of the base graph $G_N$. We
refer to the obtained graph as the \df{final graph} $G$, which by construction
is biconnected; the poles of the base graph $G_N$ and the endvertices of each of
its satellite edges are separation pairs in $G$. Next we investigate all
possible vertex orderings of $G$ in its $3$-page book embedding $\mathcal{E}$.

\subsection{Case analysis}
\label{ssec:cases}

Consider the base graph $G_N$ and let $\mathcal{Q}_1,\ldots,\mathcal{Q}_\kappa$
be the $\kappa$ copies of graph $Q$ that comply with the majority and the
monotonic properties. Assuming that $A$ is the first vertex in $\mathcal{E}$, we
consider two main cases in our proof:

\medskip
\begin{compactenum}[\bf {C.}1.]
\item \label{c:AxBy} There exist two terminals of $\mathcal{Q}_1$ that are on opposite sides of edge $(A,B)$ in $\mathcal{E}$.
\item \label{c:AxyB} All terminals of $\mathcal{Q}_1$ are on the same side of $(A,B)$ in $\mathcal{E}$. 
\end{compactenum}

\medskip\noindent\textbf{Case~C.\ref*{c:AxBy}}: We first rule out
Case~C.\ref{c:AxBy} in which there exist two terminals of $\mathcal{Q}_1$, say
$\langle x_1, y_1 \rangle$, that are on opposite sides of edge $(A,B)$ in
$\mathcal{E}$. Observe that in this case it is not a loss of generality to
assume that $x_1$ and $y_1$ are consecutive in the sequence of terminals of
$\mathcal{Q}_1$. By the majority property, the corresponding terminals $\langle
x_2, y_2 \rangle, \ldots, \langle x_\kappa, y_\kappa \rangle$ of
$\mathcal{Q}_2,\ldots,\mathcal{Q}_\kappa$ are also on opposite sides of edge
$(A,B)$. Let $\langle a_1, b_1 \rangle, \ldots, \langle a_\kappa, b_\kappa 
\rangle$ be the corresponding satellite vertices of $\langle x_1, y_1 \rangle,
\ldots, \langle x_\kappa, y_\kappa  \rangle$. W.l.o.g., we further assume that
$x_1 \prec \ldots \prec x_\kappa$, which by the monotonic property implies that
either $y_1 \prec \ldots \prec y_\kappa$ or $y_\kappa \prec \ldots \prec y_1$.
Since $G_N$ is symmetric with respect to $A$ and $B$, we may further assume that
the ordering of the vertices in $\mathcal{E}$ is either $[A \ldots x_1 \ldots
x_\kappa \ldots B  \ldots y_1 \ldots y_\kappa ]$ or $[A \ldots x_1 \ldots
x_\kappa \ldots B  \ldots y_\kappa \ldots y_1 ]$. We next prove that both
patterns are forbidden, assuming $\kappa=3$. Since $G_N$ is symmetric with
respect to $A$ and $B$, by the majority property we may further assume w.l.o.g.\
that $a_i$ and $x_i$ are on the same side of $(A,B)$, namely, $A \prec a_i \prec
B$ holds, for each $i=1,\ldots,\kappa$.

\begin{pattern}
$[A \dots x_1 \dots x_2 \dots x_3 \dots B \dots y_1 \dots y_2 \dots y_3 \dots]$
\end{pattern}
\begin{proof}
By the monotonic property, it follows that either $a_1 \prec a_2 \prec a_3$ or 
$a_3 \prec a_2 \prec a_1$ holds, and that $b_1 \prec b_2 \prec b_3$ or 
$b_3 \prec b_2 \prec b_1$  holds. We start with a few auxiliary propositions.

\begin{adjustwidth}{1.5em}{}
\begin{proposition}\label{prop:AxBy1}
$A \prec x_3 \prec a_3 \prec a_2 \prec a_1 \prec B$.
\end{proposition}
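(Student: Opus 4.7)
The plan is to establish the claim in two stages: first fix the monotonic direction of the satellites $a_1, a_2, a_3$, and then pin down their positions relative to the terminals $x_1, x_2, x_3$. Throughout, I rely on the twin-edge restriction of Lemma~\ref{lm:twin-edges} and the forbidden sub-patterns of Lemma~\ref{lem:no3}.

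First I would show $a_3 \prec a_2 \prec a_1$. By the monotonic property only the two orderings $a_1 \prec a_2 \prec a_3$ and $a_3 \prec a_2 \prec a_1$ are possible. If the former held, then the three twin edges $(a_i, y_i)$ for $i = 1, 2, 3$ would have their left endpoints in increasing order and their right endpoints $y_1 \prec y_2 \prec y_3$ also in increasing order, so every two of them would form a $2$-twist. This contradicts Lemma~\ref{lm:twin-edges}, leaving $a_3 \prec a_2 \prec a_1$ as the only possibility.

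Next I would show $x_3 \prec a_3$, for which the majority property leaves two sub-cases. If $x_i \prec a_i$ holds in every copy, then the monotonic property applied to the pair of vertices $(x, a)$ of $Q$ forces the three twin pairs $(x_i, a_i)$ to form either a $3$-rainbow, a $3$-twist, or a $3$-necklace; combined with $x_1 \prec x_2 \prec x_3$ and $a_3 \prec a_2 \prec a_1$ the only consistent configuration is the rainbow $x_1 \prec x_2 \prec x_3 \prec a_3 \prec a_2 \prec a_1$, and the proposition follows.

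The hard sub-case is $a_i \prec x_i$ for all $i$: the same analysis then forces the prefix $A \prec a_3 \prec a_2 \prec a_1 \prec x_1 \prec x_2 \prec x_3 \prec B$ followed by $y_1 \prec y_2 \prec y_3$, which contradicts the claim. To rule this ordering out, I would exhibit the four mutually independent edges $(A, x_1)$, $(a_3, x_3)$, $(a_2, B)$ and $(a_1, y_1)$. Their left endpoints appear in the order $A, a_3, a_2, a_1$ and their right endpoints in the order $x_1, x_3, B, y_1$, so by inspection every two of these edges cross and together they form a $4$-twist, contradicting Lemma~\ref{lem:no3}(\ref*{no3:twist}). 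The main obstacle is choosing this specific quadruple---one edge per copy, threading through the nested rainbow structure in a twisted way, together with the $A$-to-terminal edge of the innermost copy---rather than verifying the crossings, which is routine once the quadruple is in hand.
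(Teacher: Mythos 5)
Your proof is correct and follows essentially the same route as the paper's: rule out $a_1 \prec a_2 \prec a_3$ via the twin edges $(a_i,y_i)$, split on $a_i \prec x_i$ versus $x_i \prec a_i$ by the majority property, and kill the former with a $4$-twist. The only (immaterial) difference is your choice of $(A,x_1)$ where the paper uses $(A,x_2)$ in the $4$-twist $\{(A,x_1),(a_3,x_3),(a_2,B),(a_1,y_1)\}$; both quadruples are pairwise crossing under the order $[A,a_3,a_2,a_1,x_1,x_2,x_3,B,y_1]$.
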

\begin{proof}
If $a_1 \prec a_2 \prec a_3$, then the twin edges $(a_1, y_1)$, $(a_2, y_2)$ and
$(a_3, y_3)$ form a $3$-twist in $\mathcal{E}$, which contradicts
Lemma~\ref{lm:twin-edges}. Hence, $a_3 \prec a_2 \prec a_1$ must hold. Assume
now that $a_1 \prec x_1$, which by the majority property implies that $a_i \prec
x_i$, for each $i=1,2,3$. Since $a_3 \prec a_2 \prec a_1$ holds, it follows that
the relative order is $[A \dots a_3 \dots a_2 \dots a_1 \dots x_1 \dots x_2
\dots x_3 \dots B]$. Hence, edges $(a_1, y_1)$, $(a_2, B)$, $(a_3, x_3)$ and
$(A, x_2)$ form a $4$-twist in $\mathcal{E}$, which is a contradiction by
Lemma~\ref{lem:no3}.\ref{no3:twist}. Thus, $x_1 \prec a_1$ must hold, which by
the majority property implies that $x_i \prec a_i$, for each $i=1,2,3$. Since
$x_3 \prec a_3$ and $a_3 \prec a_2 \prec a_1$ holds, the proposition follows.
\end{proof}

\noindent Similarly, we can prove the following.\medskip

\begin{proposition}\label{prop:AxBy3}
If $A \prec b_1 \prec B$, then $A \prec b_3 \prec b_2 \prec b_1 \prec x_1 \prec B$.
\end{proposition}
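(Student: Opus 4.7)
The plan is to mirror the proof of Proposition~\ref{prop:AxBy1} step by step, exploiting the $A \leftrightarrow B$ symmetry of $G_N$ but accommodating the structural asymmetry that the satellite $b_i$ is adjacent to $A$ rather than to $B$. As in the previous proposition, the argument splits into two stages: first I pin down the relative order of the satellites as $b_3 \prec b_2 \prec b_1$, and then I show that $b_1$ lies immediately to the left of $x_1$ inside the interval $(A,B)$.

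\textbf{Step~1: Ordering the satellites.} The monotonic property (already invoked at the start of the proof of Proposition~\ref{prop:AxBy1}) forces either $b_1 \prec b_2 \prec b_3$ or $b_3 \prec b_2 \prec b_1$. The hypothesis $A \prec b_1 \prec B$, combined with the majority property, places all three $b_i$ between $A$ and $B$; hence in the first case the three twin edges $(b_1, y_1), (b_2, y_2), (b_3, y_3)$ would form a $3$-twist (since the $y_i$'s follow $B$ in increasing index order along $\mathcal{E}$). This contradicts Lemma~\ref{lm:twin-edges}, so $b_3 \prec b_2 \prec b_1$.

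\textbf{Step~2: Locating $b_1$ relative to $x_1$.} I assume for contradiction that $x_1 \prec b_1$. By the majority property $x_i \prec b_i$ for $i=1,2,3$, and combined with $x_1 \prec x_2 \prec x_3$, $b_3 \prec b_2 \prec b_1$, and in particular $x_3 \prec b_3$, this forces the sub-order $[A, x_1, x_2, x_3, b_3, b_2, b_1, B, y_1, y_2, y_3]$ inside $\mathcal{E}$. Within this sub-order I exhibit the four edges
\[
(A, b_2),\quad (x_1, b_1),\quad (x_2, B),\quad (b_3, y_3),
\]
all of which belong to $Q$: the first via the path embedded in face $F_2$ of $K_{2,k}$, the second and fourth via the triangulation edges between consecutive terminals and their satellites, and the third via $K_{2,k}$ itself. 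Their left endpoints appear in the order $A, x_1, x_2, b_3$ and their right endpoints in the matching order $b_2, b_1, B, y_3$, with all four left endpoints preceding all four right endpoints; hence they form a $4$-twist, contradicting Lemma~\ref{lem:no3}.\ref{no3:twist}. Therefore $b_1 \prec x_1$, and together with Step~1 and the assumption that $A$ is first in $\mathcal{E}$, this yields the claimed chain $A \prec b_3 \prec b_2 \prec b_1 \prec x_1 \prec B$.

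\textbf{Main obstacle.} The delicate point is selecting the correct $4$-twist. A naive $A \leftrightarrow B$, $a \leftrightarrow b$ swap of the twist used in Proposition~\ref{prop:AxBy1} does \emph{not} work here, because the $y_i$'s are not simultaneously reflected to the opposite side of $(A,B)$ and because no edge $(b_i, B)$ exists in $Q$. In particular the tempting replacement $(x_3, b_3)$ of $(a_3, x_3)$ turns out to span only two consecutive positions and is far too short to reach past $B$. The key is to use the twin edge $(b_3, y_3)$ anchored at the \emph{leftmost} satellite $b_3$ as the long fourth edge of the twist; once this choice is fixed, the remaining three edges $(A, b_2), (x_1, b_1), (x_2, B)$ are essentially the unique ones respecting the required left-right matching.
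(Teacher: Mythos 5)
Your proof is correct and is exactly the argument the paper intends: the paper only remarks that Proposition~\ref{prop:AxBy3} follows ``similarly'' to Proposition~\ref{prop:AxBy1}, and you carry out precisely that analogue — a twin-edge $3$-twist with $(b_i,y_i)$ to force $b_3 \prec b_2 \prec b_1$, followed by a $4$-twist (here correctly adapted to $(A,b_2)$, $(x_1,b_1)$, $(x_2,B)$, $(b_3,y_3)$, all of which exist in $Q$) to exclude $x_1 \prec b_1$. No gaps; your choice of the fourth edge anchored at $b_3$ is the right adaptation of the paper's twist from Proposition~\ref{prop:AxBy1}.
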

\begin{proposition}\label{prop:AxBy4}
If $B \prec b_1$, then $B \prec b_3 \prec b_2 \prec b_1 \prec y_1$.
\end{proposition}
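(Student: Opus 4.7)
The plan is to mirror the two-step proof of Proposition~\ref{prop:AxBy1}, this time using twin edges involving the satellites $b_i$ past $B$ and again landing on a $4$-twist to invoke Lemma~\ref{lem:no3}.(\ref{no3:twist}). As a preliminary, the majority property promotes $B \prec b_1$ to $B \prec b_i$ for every $i \in \{1,2,3\}$, and the monotonic property applied to the twin vertices $b_1,b_2,b_3$ leaves only two monotonic orderings to consider.

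First I would rule out $b_1 \prec b_2 \prec b_3$. In that case the twin edges $(x_i,b_i)$---which exist since each $b_i$ is a satellite of the consecutive terminals $x_i$ and $y_i$---realize the order $x_1 \prec x_2 \prec x_3 \prec B \prec b_1 \prec b_2 \prec b_3$, so that $(x_1,b_1),(x_2,b_2),(x_3,b_3)$ form a $3$-twist of twin edges; in particular $(x_1,b_1)$ and $(x_2,b_2)$ form a $2$-twist, contradicting Lemma~\ref{lm:twin-edges}. Hence $b_3 \prec b_2 \prec b_1$.

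Next I would establish $b_1 \prec y_1$ by contradiction. Assuming $y_1 \prec b_1$, the majority property gives $y_i \prec b_i$ for every $i$; in particular $y_3 \prec b_3$, so $y_3$ precedes all $b_i$, and combining with $y_1 \prec y_2 \prec y_3$, $b_3 \prec b_2 \prec b_1$, and $A \prec x_3 \prec a_3 \prec a_2 \prec a_1 \prec B$ from Proposition~\ref{prop:AxBy1} locks the spine order as
\[
A \prec x_1 \prec x_2 \prec x_3 \prec a_3 \prec a_2 \prec a_1 \prec B \prec y_1 \prec y_2 \prec y_3 \prec b_3 \prec b_2 \prec b_1.
\]
I would then exhibit the four pairwise independent edges $(x_3,B),(a_3,y_3),(a_2,b_2),(a_1,b_1)$, whose eight endpoints appear in the order $x_3 \prec a_3 \prec a_2 \prec a_1 \prec B \prec y_3 \prec b_2 \prec b_1$ and thus form a $4$-twist in $\mathcal{E}$, contradicting Lemma~\ref{lem:no3}.(\ref{no3:twist}). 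This rules out $y_1 \prec b_1$ and yields the desired $B \prec b_3 \prec b_2 \prec b_1 \prec y_1$.

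The hard part is selecting the witnessing $4$-twist: the naive mirror of the tuple $(A,x_2),(a_3,x_3),(a_2,B),(a_1,y_1)$ used in Proposition~\ref{prop:AxBy1} does not yield a $4$-twist here, because the $y_i$'s lie between $B$ and the $b_i$'s rather than symmetrically mirroring the $a_i$'s between the $x_i$'s and $B$. The working tuple instead pairs $(x_3,B)$ and $(a_3,y_3)$ with the two satellite edges $(a_2,b_2)$ and $(a_1,b_1)$, which are present in $G$ as the pole edges of the attached $H_N$-copies; verifying pairwise independence and the exact interleaving of their endpoints relies on the full derived spine order above.
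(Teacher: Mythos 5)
Your proof is correct and is essentially the argument the paper intends when it writes ``Similarly, we can prove the following'': a $3$-twist of twin edges (Lemma~\ref{lm:twin-edges}) forces $b_3 \prec b_2 \prec b_1$, and a $4$-twist (Lemma~\ref{lem:no3}.\ref{no3:twist}) in the order pinned down via Proposition~\ref{prop:AxBy1} excludes $y_1 \prec b_1$; your witness $(x_3,B)$, $(a_3,y_3)$, $(a_2,b_2)$, $(a_1,b_1)$ is indeed a $4$-twist of pairwise independent edges present in $G$, since each satellite edge survives as the pole edge of the attached copy of $H_N$ (consistent with the paper's own later use of $(a_i,b_i)$). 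As a minor remark, in the derived order the twin edges $(a_1,b_1)$ and $(a_2,b_2)$ already cross, so Lemma~\ref{lm:twin-edges} alone would finish this step without invoking the full $4$-twist.
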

\end{adjustwidth}

\medskip\noindent Let $i \in \{1,2,3\}$. We consider two cases, depending on
whether $a_i$ and $b_i$ are on the same or on different sides of $(A,B)$. Assume
first the former case. Since $A \prec a_i \prec B$, it follows that \mbox{$A
	\prec b_i \prec B$}. By Propositions~\ref{prop:AxBy1} and~\ref{prop:AxBy3}, the
relative order is $[A \dots b_3 \dots b_2 \dots b_1 \dots x_1 \dots x_3 \dots
a_3 \dots a_2 \dots a_1]$. Hence, edges $(a_1,b_1)$, $(a_2,b_2)$, $(a_3,b_3)$
and $(A,x_1)$ form a $4$-twist; a contradiction by
Lemma~\ref{lem:no3}.\ref{no3:twist}. Assume $a_i$ and $b_i$ are on different
sides of $(A,B)$. By the majority property, $B \prec y_i$. By
Propositions~\ref{prop:AxBy1} and~\ref{prop:AxBy4}, the relative order is $[A
\dots a_3 \dots a_2 \dots a_1 \dots B \dots b_3 \dots b_2 \dots b_1]$, which
implies that edges $(a_1,b_1)$, $(a_2,b_2)$, $(a_3,b_3)$ and $(A,B)$  form a
$4$-twist; a contradiction by Lemma~\ref{lem:no3}.\ref{no3:twist}.
\end{proof}

\newcounter{AxxByy}
\begin{pattern}\label{p:AxxByy}
$[A \dots x_1 \dots x_2 \dots x_3 \dots B \dots y_3 \dots y_2 \dots y_1]$
\end{pattern}
\begin{proof}
Let $i \in \{1,2,3\}$. By the monotonic property, either $a_1 \prec a_2 \prec
a_3$ or $a_3 \prec a_2 \prec a_1$ holds, and either $b_1 \prec b_2 \prec b_3$ or
$b_3 \prec b_2 \prec b_1$  holds. Since $A \prec a_i \prec B$, it follows that
if $a_3 \prec a_2 \prec a_1$, then the twin edges $(a_1,y_1)$, $(a_2,y_2)$ and
$(a_3,y_3)$ form a $3$-twist, which a contradiction by
Lemma~\ref{lm:twin-edges}. Hence, $A \prec a_1 \prec a_2 \prec a_3 \prec B$
holds.

\medskip\noindent We proceed by distinguishing two subcases depending on whether
the satellite vertices $a_i$ and $b_i$ are on the same or different sides of
$(A,B)$. We first consider the former case. Since $A \prec a_1 \prec a_2 \prec
a_3 \prec B$ holds, it follows that either $A \prec b_1 \prec b_2 \prec b_3
\prec B$ or $A \prec b_3 \prec b_2 \prec b_1 \prec B$ holds. If $b_3 \prec b_2
\prec b_1$, then the twin edges  $(b_1,y_1)$, $(b_2,y_2)$ and $(b_3,y_3)$ form a
$3$-twist, which is a contradiction by Lemma~\ref{lm:twin-edges}. Hence, $A
\prec b_1 \prec b_2 \prec b_3 \prec B$ must hold. By the monotonic property, the
partial order of vertices $A$, $B$ and of the vertices in $\{x_i,y_i,a_i,b_i;\
i=1,2,3\}$ is one of the following
FP\ref{p:AxxByy}.\ref{p:AaxbBy}-FP\ref{p:AxxByy}.\ref{p:AbaxBy}; note that the
cases that  corresponds to FP\ref*{p:AxxByy}.\ref{p:AabxBy} and
FP\ref*{p:AxxByy}.\ref{p:AbaxBy} in which the terminal $x_i$ precedes the
satellite vertices $a_i$ and $b_i$, are symmetric to
FP\ref*{p:AxxByy}.\ref{p:AbaxBy} and FP\ref*{p:AxxByy}.\ref{p:AabxBy},
respectively, due to the symmetry of $G_N$ with respect to $A$ and $B$.

\begin{enumerate}[\bf FP\ref*{p:AxxByy}.1]
\item \label{p:AaxbBy} $[A \dots a_1 \dots x_1 \dots b_1 \dots a_2 \dots x_2 \dots b_2 \dots a_3 \dots x_3 \dots b_3 \dots B \dots y_3 \dots y_2 \dots y_1]$

\begin{figure}[t!]
    \centering
    \begin{subfigure}[b]{.24\textwidth}
		\centering
		\includegraphics[width=\textwidth,page=1]{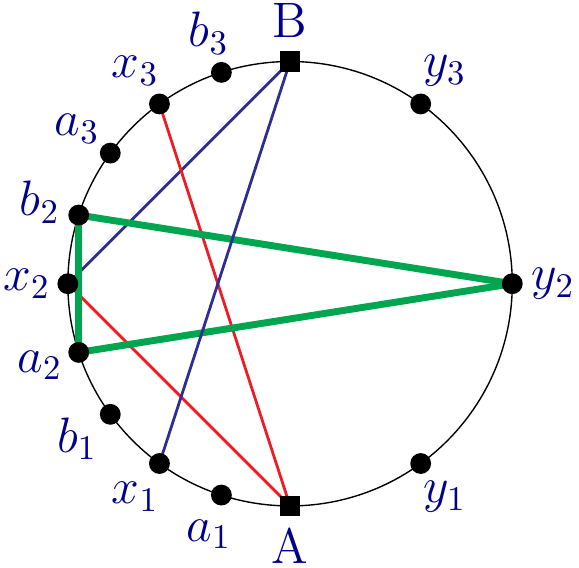}
		\caption{}
		\label{fig:AaxbBy}
	\end{subfigure}
	\hfil
	\begin{subfigure}[b]{.24\textwidth}
		\centering
		\includegraphics[width=\textwidth,page=2]{pics/AxBy}
		\caption{}
		\label{fig:AbxaBy}
	\end{subfigure}
	\hfil
	\begin{subfigure}[b]{.24\textwidth}
		\centering
		\includegraphics[width=\textwidth,page=3]{pics/AxBy}
		\caption{}
		\label{fig:AabxBy}
	\end{subfigure}
	\hfil
	\begin{subfigure}[b]{.24\textwidth}
		\centering
		\includegraphics[width=\textwidth,page=4]{pics/AxBy}
		\caption{}
		\label{fig:AbaxBy}
	\end{subfigure}
    \caption{Illustrations for 
    (a)~FP\ref*{p:AxxByy}.\ref{p:AaxbBy},
    (b)~FP\ref*{p:AxxByy}.\ref{p:AbxaBy},
    (c)~FP\ref*{p:AxxByy}.\ref{p:AabxBy}, and
    (d)~FP\ref*{p:AxxByy}.\ref{p:AbaxBy}.}
    \label{fig:AxxByy}
\end{figure}

Refer to Fig.~\ref{fig:AaxbBy}. Since edges $(A,x_3)$, $(x_1,B)$ and $(a_2,y_2)$
form a $3$-twist, they are assigned to different pages in $\mathcal{E}$. By the
majority property, we may assume that $(A,x_i) \in \RS$, $(a_i,y_i) \in \GS$ and
$(x_i,B) \in \BS$. It follows that $(b_i,y_i) \in \GS$ and $(a_i,b_i) \in \GS$.
Consider now vertex $\stel{aby}{2}$ of $G_N$ that was introduced due to the
stellation of face $\langle a_2, b_2, y_2 \rangle$ in $G_N$. Due to edge
$(b_2,\stel{aby}{2})$, vertex $\stel{aby}{2}$ can be neither in $[A \dots a_2]$
nor in $[y_2 \dots A]$, as otherwise $(b_2,\stel{aby}{2})$ crosses $(x_2,B) \in
\BS$, $(a_2,y_2) \in \GS$ and either $(A,x_2) \in \RS$ or $(A,x_3) \in \RS$,
respectively, which is a contradiction by Lemma~\ref{lem:no3}.\ref{no3:cross3}.
Similarly, due to edge $(y_2,\stel{aby}{2})$, vertex $\stel{aby}{2}$ cannot be
in $[a_2 \dots b_2]$. Finally, due to edge $(a_2,\stel{aby}{2})$, vertex
$\stel{aby}{2}$ cannot be in $[b_2 \dots y_2]$. Hence, there is no feasible
placement of $\stel{aby}{2}$ in $\mathcal{E}$, which is a contradiction.

\item \label{p:AbxaBy} $[A \dots b_1 \dots x_1 \dots a_1 \dots b_2 \dots x_2 \dots a_2 \dots b_3 \dots x_3 \dots a_3 \dots B \dots y_3 \dots y_2 \dots y_1]$

Refer to  Fig.~\ref{fig:AbxaBy}. This case can be led to a contradiction following the reasoning of~FP\ref*{p:AxxByy}.\ref{p:AaxbBy}.

\item \label{p:AabxBy} $[A \dots a_1 \dots b_1 \dots x_1 \dots a_2 \dots b_2 \dots x_2 \dots a_3 \dots b_3 \dots x_3 \dots B \dots y_3 \dots y_2 \dots y_1]$

Refer to Fig.~\ref{fig:AabxBy}. Since edges $(A,x_3)$, $(x_1,B)$ and $(a_2,y_2)$ form a $3$-twist, we can assume that $(A,x_i) \in \RS$, $(a_i,y_i) \in \GS$ and $(x_i,B) \in \BS$, which implies that $(b_i,y_i) \in \GS$, $(a_i,B) \in \BS$, $(A,b_i) \in \RS$ and $(a_i,x_i) \in \BS$. 
Consider now vertex $\stel{Bax}{2}$ of $G_N$ that was introduced due to the stellation of face $\langle B, a_2, x_2 \rangle$ in $G_N$. 
Due to edge $(a_2,\stel{Bax}{2})$, vertex $\stel{Bax}{2}$ cannot be in $[x_2 \dots y_2]$. Analogously, vertex $\stel{Bax}{2}$ cannot be in $[y_2 \dots a_2]$, due to edge $(x_2,\stel{Bax}{2})$. Finally, vertex $\stel{Bax}{2}$ cannot be in $[a_2 \dots x_2]$, due to edge $(B,\stel{Bax}{2})$. Hence, there is no feasible placement of $\stel{Bax}{2}$ in $\mathcal{E}$; a contradiction.

\item \label{p:AbaxBy} $[A \dots b_1 \dots a_1 \dots x_1 \dots b_2 \dots a_2 \dots x_2 \dots b_3 \dots a_3 \dots x_3 \dots B \dots y_3 \dots y_2 \dots y_1]$

Refer to Fig.~\ref{fig:AbaxBy}. Since edges $(A,x_3)$, $(x_1,B)$ and $(a_2,y_2)$ form a $3$-twist, we can assume that $(A,x_i) \in \RS$, $(a_i,y_i) \in \GS$ and $(x_i,B) \in \BS$. Hence, $(a_i,B),(x_i,B) \in \BS$, $(b_i,y_i) \in \GS$ and $(A,b_i),(b_i,x_i) \in \RS$. 
It is not hard to see that there is no feasible placement for vertex $\stel{Abx}{2}$ of $G_N$ introduced due to the stellation of face $\langle A, b_2, x_2 \rangle$~in~$\mathcal{E}$.
\setcounter{AxxByy}{\value{enumi}}
\end{enumerate} 
 
\noindent We now consider the case in which the satellite vertices $a_i$ and
$b_i$ are on different sides of $(A,B)$. Since $A \prec a_1 \prec a_2 \prec a_3
\prec B$, either $B \prec b_1 \prec b_2 \prec b_3$ or $B \prec b_3 \prec b_2
\prec b_1$ holds. If $b_1 \prec b_2 \prec b_3$, then a $3$-twist is formed by
the twin edges $(b_1,x_1)$, $(b_2,x_2)$ and $(b_3,x_3)$, which is a
contradiction by Lemma~\ref{lm:twin-edges}. Hence, $b_3 \prec b_2 \prec b_1$
must hold. By the monotonic property, the partial order of vertices $A$, $B$ and
of the vertices in $\{x_i,y_i,a_i,b_i;\ i=1,2,3\}$ is one of the following:

\begin{enumerate}[\bf FP\ref*{p:AxxByy}.1]
\setcounter{enumi}{\value{AxxByy}}
\item \label{p:AxaBby} $[A \dots x_1 \dots a_1 \dots x_2 \dots a_2 \dots x_3 \dots a_3 \dots B \dots b_3 \dots y_3 \dots b_2 \dots y_2 \dots b_1 \dots y_1]$

Edges $(A,x_3)$, $(x_1,B)$, $(x_2, b_2)$, $(a_2, y_2)$ form a $4$-twist; a contradiction by Lemma~\ref{lem:no3}.\ref{no3:twist}.

\item \label{p:AaxByb} $[A \dots a_1 \dots x_1 \dots a_2 \dots x_2 \dots a_3 \dots x_3 \dots B \dots y_3 \dots b_3 \dots y_2 \dots b_2 \dots y_1 \dots b_1]$

Edges $(A,x_3)$, $(x_1,B)$, $(x_2, b_2)$, $(a_2, y_2)$ form a $4$-twist; a contradiction by Lemma~\ref{lem:no3}.\ref{no3:twist}.

\item \label{p:AaxBby} $[A \dots a_1 \dots x_1 \dots a_2 \dots x_2 \dots a_3 \dots x_3 \dots B \dots b_3 \dots y_3 \dots b_2 \dots y_2 \dots b_1 \dots y_1]$

\begin{figure}[t!]
	\centering
    \begin{subfigure}[b]{.24\textwidth}
		\centering
		\includegraphics[width=\textwidth,page=5]{pics/AxBy}
		\caption{}
		\label{fig:AaxBby}
	\end{subfigure}
	\hfil
	\begin{subfigure}[b]{.24\textwidth}
		\centering
		\includegraphics[width=\textwidth,page=6]{pics/AxBy}
		\caption{}
		\label{fig:AxaBby}
	\end{subfigure}
    \caption{Illustrations for 
    (a)~FP\ref*{p:AxxByy}.\ref{p:AaxBby} and (b)~FP\ref*{p:AxxByy}.\ref{p:AxaByb}.}
\end{figure}

As opposed to
FP\ref*{p:AxxByy}.\ref{p:AaxbBy}--FP\ref*{p:AxxByy}.\ref{p:AaxByb}, we do not
directly rule out this case. Instead, we identify a copy of $G_N$ in the final
graph $G$ (see Section~\ref{ssec:idea}) for which the preconditions of
Case~C.\ref{c:AxyB} hold. Thus, we reduce this case to C.\ref{c:AxyB}, for which
a direct contradiction is shown below.

Refer to Fig.~\ref{fig:AaxBby}. Since edges $(A,x_3)$, $(x_1,B)$ and $(a_2,y_2)$
form a $3$-twist, by the majority property, we may assume that $(A,x_i) \in
\RS$, $(a_i,y_i) \in \GS$ and $(x_i,B) \in \BS$. Hence, $(b_i,x_i) \in \GS$ and
$(a_i,b_i) \in \GS$. Since $(a_i,y_i) \in \GS$ and since edges $(A,y_3)$,
$(y_1,B)$ and $(a_2,y_2)$ also form a $3$-twist, by the majority property, we
may further assume that either $(A,y_i) \in \BS$ and $(y_i,B) \in \RS$, or
$(A,y_i) \in \RS$ and $(y_i,B) \in \BS$. In the following, we discuss the former
case; the latter is analogous.

Consider the copy $H_N$ of graph $G_N$ that is attached along the satellite edge
$(a_2,b_2)$ in the final graph $G$, and let $Q_{(a_2,b_2)}$ be \textit{any} copy
of graph $Q$ in~$H_N$. We  prove that no two terminals of $Q_{(a_2,b_2)}$ are on
opposite sides of $(a_2,b_2)$. Assume the contrary, which implies that there
exist two consecutive terminals, say $x$ and $y$, of $Q_{(a_2,b_2)}$ that are on
opposite sides of $(a_2,b_2)$. It is not difficult to see that either $x$ is in
$[a_2 \dots x_2]$ and~$y$ is in $[b_2 \dots y_2]$, or vice versa; see
Fig.~\ref{fig:AaxBby}. By construction of graph $Q_{(a_2,b_2)}$, vertices $x$
and~$y$ are connected by a path of length~$2$ in
$Q_{(a_2,b_2)}\setminus\{a_2,b_2\}$. Let $z$ be the intermediate vertex of this
path. Due to edge $(x,z)$, vertex $z$ can be only in $[x_1 \dots x_2]$, which
implies that its second edge $(z,y)$ crosses three edges of different colors; a
contradiction by Lemma~\ref{lem:no3}.\ref{no3:cross3}. Hence, all terminals of
$Q_{(a_2,b_2)}$ are on the same side of $(a_2,b_2)$. As this property holds for
all copies of graph $Q$ in $H_N$, Case~C.\ref{c:AxyB} applies for graph $H_N$,
as we mentioned above.

\item \label{p:AxaByb} $[A \dots x_1 \dots a_1 \dots x_2 \dots a_2 \dots x_3 \dots a_3 \dots B \dots y_3 \dots b_3 \dots y_2 \dots b_2 \dots y_1 \dots b_1]$

Refer to  Fig.~\ref{fig:AxaBby}. This case can be reduced to C.\ref{c:AxyB} closely following the reasoning of~FP\ref*{p:AxxByy}.\ref{p:AaxBby}.
\end{enumerate}
This concludes the discussion of Case C.\ref{c:AxBy} in which there exist two
terminals of $\mathcal{Q}_1$ (and thus, of
$\mathcal{Q}_2,\ldots,\mathcal{Q}_\kappa$) that are on opposite sides of edge
$(A,B)$ in $\mathcal{E}$.
\end{proof}

\medskip\noindent\textbf{Case~C.\ref*{c:AxyB}}: We next rule out the case in
which all terminals of $\mathcal{Q}_1$ (and, thus of
$\mathcal{Q}_2,\ldots,\mathcal{Q}_\kappa$) are on the same side of $(A,B)$ in
$\mathcal{E}$. By Fact~\ref{fact:2} applied on $\mathcal{Q}_1$, we may assume
that there exist two terminals, and thus two consecutive terminals $\langle x_1,
y_1 \rangle$, of $\mathcal{Q}_1$ such that either edges $(A,x_1)$ and $(A,y_1)$,
or edges $(B,x_1)$ and $(B,y_1)$ have been assigned to different pages in
$\mathcal{E}$. Assume w.l.o.g.\ that $(B,x_1) \in \RS$ and $(B,y_1) \in \GS$.
Since $G_N$ is symmetric with respect to $A$ and $B$, we may further assume
w.l.o.g.\ that $A \prec x_1 \prec y_1 \prec B$. By the majority property, the
corresponding terminals $\langle x_2, y_2 \rangle, \ldots, \langle x_\kappa,
y_\kappa \rangle$ of $\mathcal{Q}_2,\ldots,\mathcal{Q}_\kappa$ are also between
$A$ and $B$ in $\prec$, and $(B,x_i) \in \RS$ and $(B,y_i) \in \GS$, for each
$i=1,\ldots,\kappa$. W.l.o.g., let $x_1 \prec \ldots \prec x_\kappa$. Finally,
let $\langle a_1, b_1 \rangle, \ldots, \langle a_\kappa, b_\kappa  \rangle$ be
the corresponding satellite vertices of $\langle x_1, y_1 \rangle, \ldots,
\langle x_\kappa, y_\kappa  \rangle$. %
By Lemma~\ref{lem:pair-conf}, there are three subcases to consider, namely, the
pairs $\langle x_1,y_1 \rangle, \ldots, \langle x_\kappa, y_\kappa \rangle$ can
form a $\kappa$-twist, a $\kappa$-rainbow, or a $\kappa$-necklace; refer to
Forbidden Patterns~\ref{p:AxxfyyB}, \ref{p:AxxyyB} and~\ref{p:AxyxyB},
respectively. To rule out the first two, it suffices to assume $\kappa = 3$.
However, for the last one we use a larger value for $\kappa$.

\begin{pattern}\label{p:AxxfyyB}
$[A \ldots x_1 \ldots x_2 \ldots x_3 \ldots y_1 \ldots y_2 \ldots y_3 \ldots B]$
\end{pattern}
\begin{proof}
Let $i \in \{1,2,3\}$. Since edge $(A,y_2)$ crosses both $(B,x_1) \in \RS$ and $(B,y_1) \in \GS$, by the majority property that $(A,y_i) \in \BS$. Similarly, $(A,x_i) \in \BS$ or $(A,x_i) \in \GS$. 
\begin{adjustwidth}{1.5em}{}
\begin{proposition}\label{prop:AabB}
$x_1 \prec a_i \prec y_3$ and $x_1 \prec  b_i \prec y_3$
\end{proposition}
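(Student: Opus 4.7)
The plan is to rule out, for each $i \in \{1,2,3\}$, the possibility that $a_i$ lies outside $[x_1, y_3]$ in $\mathcal{E}$; the argument for $b_i$ is then entirely symmetric. Since $A$ is the first vertex, the alternatives to $x_1 \prec a_i \prec y_3$ are (I)~$A \prec a_i \prec x_1$ and (II)~$y_3 \prec a_i$. The key leverage is the page assignments just established---$(A, y_j) \in \BS$, $(B, x_j) \in \RS$, and $(B, y_j) \in \GS$ for $j = 1, 2, 3$---which cover the central region with three nested fans of distinct colors. An appropriately chosen edge from a misplaced $a_i$ must then cross edges of all three pages, contradicting Lemma~\ref{lem:no3}(\ref{no3:cross3}).

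For alternative (I), I would focus on the edge $(a_i, y_i)$. From $A \prec a_i \prec x_1 \prec y_1 \prec y_i \prec B$, this edge crosses $(A, y_j) \in \BS$ for any $j < i$, $(B, x_1) \in \RS$, and $(B, y_j) \in \GS$ for any $j < i$, which for $i \in \{2, 3\}$ is an immediate three-page contradiction. For the boundary index $i = 1$, no such $j < i$ is available, so I would invoke the monotonic property on the pair $(x,a)$: since $x_1 \prec x_2 \prec x_3$ is fixed, the only configurations compatible with $a_1 \prec x_1$ are the twist $[a_1, a_2, a_3, x_1, x_2, x_3]$ and the necklace $[a_1, x_1, a_2, x_2, a_3, x_3]$. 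The twist case is killed by Lemma~\ref{lm:twin-edges}, since the twin edges $(a_1, y_1)$ and $(a_2, y_2)$ then form a $2$-twist. In the necklace case $a_2$ sits strictly between $x_1$ and $x_2$, and the same three-page argument applied to $(a_2, y_2)$---which now crosses $(A, y_1) \in \BS$, $(B, x_2) \in \RS$, and $(B, y_1) \in \GS$---closes the case.

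Alternative (II) is handled by a mirror argument using $(a_i, x_i)$ in place of $(a_i, y_i)$. Whenever $y_3 \prec a_i$, this edge crosses $(A, y_j) \in \BS$ and $(B, y_j) \in \GS$ for every $j$, and also $(B, x_j) \in \RS$ for every $j > i$; for $i \in \{1, 2\}$ such a $j$ exists and produces the three-page contradiction. For the boundary index $i = 3$, the monotonic property on $(x,a)$ combined with the fixed order $x_1 \prec x_2 \prec x_3$ leaves the twist $[x_1, x_2, x_3, a_1, a_2, a_3]$ (excluded by a twin-edges $2$-twist on $(a_i, x_i)$ via Lemma~\ref{lm:twin-edges}), the rainbow $[x_1, x_2, x_3, a_3, a_2, a_1]$, and the necklace $[x_1, a_1, x_2, a_2, x_3, a_3]$; the surviving rainbow and necklace subcases are closed by additionally invoking the monotonic property on the pair $(a, y)$ and applying either a twin-edges or three-page argument to $(a_i, y_i)$.

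The claim $x_1 \prec b_i \prec y_3$ is proved by the same analysis with $a_i$ replaced by $b_i$ and the edges $(b_i, x_i), (b_i, y_i)$ playing the role of $(a_i, x_i), (a_i, y_i)$; the governing fan $(A, y_j), (B, x_j), (B, y_j)$ is unchanged. The main obstacle throughout is the two boundary indices $i = 1$ (for alternative (I)) and $i = 3$ (for alternative (II)): in those cases no single edge from $a_i$ (or $b_i$) crosses three pages on its own, so one must combine the local crossing analysis with Lemma~\ref{lm:twin-edges} and the monotonic property on the appropriate vertex pair of $Q$ to either force a forbidden twin-edge twist or shift attention to a different twin $a_{i'}$ whose incident edge does span three pages.
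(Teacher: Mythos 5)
Your overall strategy is the same as the paper's (exploit the three nested fans $(A,y_j)\in\BS$, $(B,x_j)\in\RS$, $(B,y_j)\in\GS$ via Lemma~\ref{lem:no3}.\ref{no3:cross3}, with Lemma~\ref{lm:twin-edges} and the monotonic property to handle boundary indices), but there is a concrete gap in your alternative~(II). There you tacitly assume $y_3 \prec a_i \prec B$: every crossing you invoke between $(a_i,x_i)$ and an edge incident to $B$ requires $a_i$ to lie strictly between that edge's endpoints, i.e.\ $a_i \prec B$. Nothing established so far forbids $B \prec a_i$, and in that sub-case both endpoints of $(a_i,x_i)$ lie outside every interval from $x_j$ or $y_j$ to $B$, so $(a_i,x_i)$ crosses none of the edges $(B,x_j)\in\RS$ or $(B,y_j)\in\GS$; your ``immediate three-page contradiction'' simply does not exist, and no red crossing is exhibited at all. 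The paper handles exactly this situation by grouping ``$a_i \prec x_1$ or $B \prec a_i$'' together and deriving the contradiction from the edge $(a_2,y_2)$ of the \emph{middle} copy (which, even when $a_2$ lies beyond $B$, still crosses one of the blue edges $(A,y_j)$, a red $(B,x_j)$ and a green $(B,y_j)$), reserving $(a_2,x_2)$ for the remaining situation $y_3 \prec a_i \prec B$. To repair your plan you must add the beyond-$B$ sub-case and switch to the $y$-edge there (or, as in the paper, reduce via the monotonic/majority properties to the satellite of the second copy); the same omission also infects your boundary analysis for $i=3$, whose rainbow/necklace subcases are only sketched and again presuppose the displaced satellites sit before $B$.

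A second, smaller slip: in alternative~(I) with $i=1$ you claim that the only configurations of the pairs $(a_j,x_j)$ compatible with $a_1 \prec x_1$ are the twist and the necklace. The $3$-rainbow $[a_3,a_2,a_1,x_1,x_2,x_3]$ is also compatible: the majority property only forces $a_j \prec x_j$ for all $j$, and Lemma~\ref{lm:twin-edges} does not exclude nested twin edges. That case is harmless---there $a_2 \prec x_1$, so your own $i=2$ argument applies verbatim---but as written the enumeration is incomplete and the stated claim is false. With these two repairs the argument goes through and is essentially a reorganized version of the paper's proof, which is terser because it always extracts the contradiction from the second copy's satellite edges rather than treating each index separately.
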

\begin{proof}
Assume to the contrary that $a_i \prec x_1$ or $y_3 \prec a_i$. If $a_i \prec
x_1$ or $B \prec a_i$, edge $(a_2,y_2)$ crosses $(A,y_1) \in \BS$, $(B,x_1) \in
\RS$, $(B,y_1) \in \GS$, a contradiction by
Lemma~\ref{lem:no3}.\ref{no3:cross3}; see Fig.~\ref{fig:AabB}. Otherwise ($y_3
\prec a_i \prec B$), edge $(a_2,x_2)$ crosses $(A,y_1) \in \BS$, $(B,x_3) \in
\RS$, $(B,y_1) \in \GS$, a contradiction by
Lemma~\ref{lem:no3}.\ref{no3:cross3}. The proof of the other claim is analogous.
\end{proof}

\begin{figure}[t!]
	\centering
    \begin{subfigure}[b]{.24\textwidth}
		\centering
		\includegraphics[width=\textwidth,page=2]{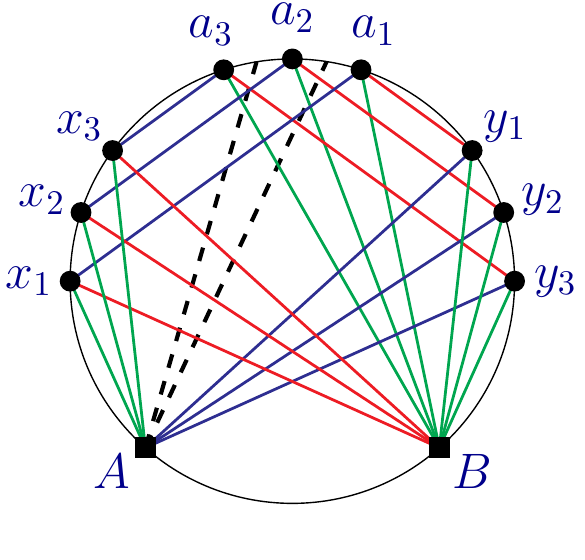}
		\caption{}
		\label{fig:AabB}
	\end{subfigure}
	\hfil
    \begin{subfigure}[b]{.24\textwidth}
		\centering
		\includegraphics[width=\textwidth,page=3]{pics/AxyB}
		\caption{}
		\label{fig:xiaibi}
	\end{subfigure}
	\hfil
    \begin{subfigure}[b]{.24\textwidth}
		\centering
		\includegraphics[width=\textwidth,page=4]{pics/AxyB}
		\caption{}
		\label{fig:axblue}
	\end{subfigure}
	\hfil
    \begin{subfigure}[b]{.24\textwidth}
		\centering
		\includegraphics[width=\textwidth,page=1]{pics/AxyB}
		\caption{}
		\label{fig:axgreen}
	\end{subfigure}
   \caption{%
   Illustrations for 
   Forbidden Pattern~\ref{p:AxxfyyB}.}
    \label{fig:AxxfyyB}
\end{figure}

\begin{proposition}\label{prop:x3aiy1}
$x_3 \prec a_3 \prec a_2 \prec a_1 \prec y_1$ and $x_3 \prec b_3 \prec b_2 \prec b_1 \prec y_1$
\end{proposition}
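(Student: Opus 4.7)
The plan is to apply the monotonic property simultaneously to two pairs of vertices of $Q$, namely the pair consisting of a satellite and the terminal $y$ and the pair consisting of the same satellite and the terminal $x$, combined with Lemma~\ref{lm:twin-edges} and the location bound $x_1 \prec a_i \prec y_3$ from Proposition~\ref{prop:AabB}. First I would treat the satellite--$y$ pair: the monotonic property ensures that the three twin pairs $\{a_i, y_i\}$ form a $3$-rainbow, $3$-twist, or $3$-necklace, and the $3$-twist is ruled out by Lemma~\ref{lm:twin-edges} since the $(a_i, y_i)$ are twin edges of $G_N$. Combined with the fixed order $y_1 \prec y_2 \prec y_3$ and the upper bound $a_i \prec y_3$, the rainbow forces the canonical chain $a_3 \prec a_2 \prec a_1 \prec y_1 \prec y_2 \prec y_3$, whereas the necklace forces $a_1 \prec y_1 \prec a_2 \prec y_2 \prec a_3 \prec y_3$ (the alternative within-pair orientation $y_i \prec a_i$ collides with $a_i \prec y_3$). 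The first configuration already supplies the desired $a_1 \prec y_1$, so the crux is eliminating the necklace.

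To discard the necklace I would invoke the monotonic property on the satellite--$x$ pair: the twin pairs $\{a_i, x_i\}$ also form a rainbow or necklace, and $x_1 \prec a_i$ forces the within-pair orientation $x_i \prec a_i$. A rainbow for the satellite--$x$ pair requires $a_3 \prec a_2 \prec a_1$, directly contradicting the necklace order $a_1 \prec a_2 \prec a_3$ inherited from the satellite--$y$ pair; a necklace for the satellite--$x$ pair places $a_2 \prec x_3$, contradicting $x_3 \prec y_1 \prec a_2$ supplied by the satellite--$y$ necklace. Hence the satellite--$y$ pair must be a rainbow; rerunning the same dichotomy for the satellite--$x$ pair under the now-established order $a_3 \prec a_2 \prec a_1$ leaves only the rainbow $x_1 \prec x_2 \prec x_3 \prec a_3 \prec a_2 \prec a_1$, which yields $x_3 \prec a_3$. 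Concatenating the two chains proves $x_3 \prec a_3 \prec a_2 \prec a_1 \prec y_1$, and the symmetric claim for $b_i$ follows verbatim by applying the monotonic property to the analogous pairs involving $b$ in place of $a$.

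The main obstacle is the careful enumeration of within-pair orientations together with the two monotonic alternatives, verifying that every combination either collides with Proposition~\ref{prop:AabB}, the fixed terminal order of Forbidden Pattern~\ref{p:AxxfyyB}, or Lemma~\ref{lm:twin-edges}; the majority property is what restores consistency by forcing the same within-pair orientation across all $\kappa$ copies of $Q$.
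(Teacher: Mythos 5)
Your proposal is correct, and it reaches the same intermediate milestone as the paper --- namely that both triples of twin edges $(a_i,x_i)$ and $(a_i,y_i)$ must form $3$-rainbows, after which Proposition~\ref{prop:AabB} pins down the stated chain --- but it excludes the necklace configurations by a different mechanism. The paper kills the $3$-necklace with a coloring argument: if, say, $(a_1,x_1),(a_2,x_2),(a_3,x_3)$ formed a $3$-necklace, then $a_2$ would lie in $[x_1 \dots x_3]$ and the edge $(a_2,y_2)$ would cross $(A,y_1)\in\BS$, $(B,x_3)\in\RS$ and $(B,y_1)\in\GS$, contradicting Lemma~\ref{lem:no3}.(\ref{no3:cross3}); the rainbow for $(a_i,y_i)$ is obtained similarly. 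You instead never touch the page assignment: you play the two triples against each other, checking all four rainbow/necklace combinations (twists being excluded by Lemma~\ref{lm:twin-edges}, within-pair orientations by Proposition~\ref{prop:AabB} plus the majority property) and observing that any combination involving a necklace forces incompatible orders of $a_1,a_2,a_3$, or the impossible cycle $a_2 \prec x_3 \prec y_1 \prec a_2$. Your enumeration is complete and each elimination is sound, so the argument works; it is arguably more elementary and slightly more general, since it shows the proposition holds for any ordering satisfying Proposition~\ref{prop:AabB} under the pattern's terminal order, independently of the colors $(A,y_i)\in\BS$, $(B,x_i)\in\RS$, $(B,y_i)\in\GS$ fixed earlier in the proof of Forbidden Pattern~\ref{p:AxxfyyB}, at the cost of a somewhat longer case check than the paper's two one-line crossing arguments.
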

\begin{proof}
We argue for the former; the latter is analogous. If the twin edges $(a_1,x_1)$,
$(a_2,x_2)$ and $(a_3,x_3)$ form a $3$-necklace, then $a_2$ is in $[x_1 \dots
x_3]$, which implies that edge $(a_2,y_2)$ crosses $(A,y_1) \in \BS$,
$(B,x_3)\in \RS$ and $(B,y_1)\in \GS$ (see Fig.~\ref{fig:xiaibi}); a
contradiction by Lemma~\ref{lem:no3}.\ref{no3:cross3}. Hence by
Lemma~\ref{lm:twin-edges}, $(a_1,x_1)$, $(a_2,x_2)$ and $(a_3,x_3)$ form a
$3$-rainbow. Similarly, we argue that edges $(a_1,y_1)$, $(a_2,y_2)$ and
$(a_3,y_3)$ also form a $3$-rainbow. Since the two $3$-rainbows share $a_1$,
$a_2$ and $a_3$, the proof follows from Proposition~\ref{prop:AabB}.
\end{proof}
\end{adjustwidth}

\noindent If $(A,x_i) \in \BS$, then $(x_i, a_i) \in \GS$ and $(B,a_i) \in \RS$,
which imply that edge $(a_3,y_3)$ crosses $(A, y_1) \in \BS$, $(x_1, a_1) \in
\GS$ and $(B,a_2) \in \RS$ (see Fig.~\ref{fig:axblue}); a contradiction by
Lemma~\ref{lem:no3}.\ref{no3:cross3}. Hence, $(A,x_i) \in \GS$. This implies
that $(x_2,a_2) \in \BS$ and $(a_2,y_2) \in \RS$; see Fig.~\ref{fig:axgreen}.
Since by majority property $(x_i,a_i) \in \BS$ and $(a_i,y_i) \in \RS$, it
follows that $(B,a_i) \in \GS$. We next argue about $b_2$. By
Proposition~\ref{prop:x3aiy1}, $b_2$ is in $[x_3 \dots y_1]$. In the presence of
$a_1$, $a_2$ and $a_3$ in the same interval, we can further restrict the
placement of $b_2$ either in $[a_3 \dots a_2]$ or in $[a_2 \dots a_1]$. However,
in both cases edge $(A,b_2)$ crosses three edges of different colors, namely,
$(B,a_3) \in \GS$, $(x_1,a_1) \in \BS$ and $(a_3,y_3) \in \RS$, which is a
contradiction by Lemma~\ref{lem:no3}.\ref{no3:cross3}.
\end{proof}

\begin{pattern}\label{p:AxxyyB}
$[A \ldots x_1 \ldots x_2 \ldots x_3 \ldots y_3 \ldots y_2 \ldots y_1 \ldots B]$
\end{pattern}
\begin{proof}
Since $(B,x_i) \in \RS$ and $(B,y_i) \in \GS$, as in the proof of Forbidden
Pattern~\ref{p:AxxfyyB}, we prove that $(A,y_i) \in \BS$, and that either 
$(A,x_i) \in \BS$ or $(A,x_i) \in \GS$. As in the proof of
Proposition~\ref{prop:x3aiy1}, we can further prove that a $3$-rainbow is formed
both by the twin edges $(a_1,x_1)$, $(a_2,x_2)$ and $(a_3,x_3)$ and by the twin
edges $(a_1,y_1)$, $(a_2,y_2)$ and $(a_3,y_3)$. Since both rainbows share $a_1$,
$a_2$ and $a_3$, it is not possible that they exist simultaneously due to the
underlying order $[x_1 \ldots x_2 \ldots x_3 \ldots y_3 \ldots y_2 \ldots y_1]$.
\end{proof}

\begin{pattern}\label{p:AxyxyB}
$[A \dots x_1 \dots y_1 \dots x_2 \dots y_2 \;\; \dots \;\; x_\kappa \dots y_\kappa \dots B]$
\end{pattern}
\begin{proof}
Let $i \in \{1,2 \ldots, \kappa\}$. Recall that $(B, x_i) \in \RS$ and $(B, y_i)
\in \GS$. Since each of $(A,x_2)$ and $(A,y_2)$ crosses both $(B,x_1) \in \RS$
and $(B,y_1) \in \GS$, by majority property it follows that $(A, x_i) \in \BS$
and $(A, y_i) \in \BS$; see Fig.~\ref{fig:AxyxyB}. To rule out this case, we
assume that $\kappa$ is even, such that $\kappa > d_Q + 4$, where $d_Q$ denotes
the length of the maximum shortest path between a terminal of graph $Q$ and
every other vertex of it that passes neither through $A$ nor$B$. Note that $d_Q
\neq n_Q$. Consider the copy $\mathcal{Q}_{\kappa/2}$ of graph $Q$, to which the
terminals $x_{\kappa/2}$ and $y_{\kappa/2}$ belong. By Case C.\ref{c:AxyB}, all
terminals of $\mathcal{Q}_{\kappa/2}$ are in $[A \dots B]$ in $\mathcal{E}$. In
the following, we show that all the vertices of $\mathcal{Q}_{\kappa/2}$ that
are different from $A$ and $B$ are in $[y_1 \dots x_{\kappa}]$. This implies
that each of the terminals of $\mathcal{Q}_{\kappa/2}$ is connected to $A$
through an edge of the $\BS$ page (as it is involved in crossings with  $(B,x_1)
\in \RS$ and $(B,y_1) \in \GS$), and to $B$ through an edge of either the $\RS$
or of the $\GS$ page (as it is involved in a crossing with $(A,y_1) \in \BS$).
The contradiction is obtained 
by Fact~\ref{fact:1} applied to $\mathcal{Q}_{\kappa/2}$, whose
preconditions~(\ref{fact1:i})--(\ref{fact1:iii}) are met as discussed above.
	
\begin{figure}
	\centering
	\includegraphics[page=5,scale=0.75]{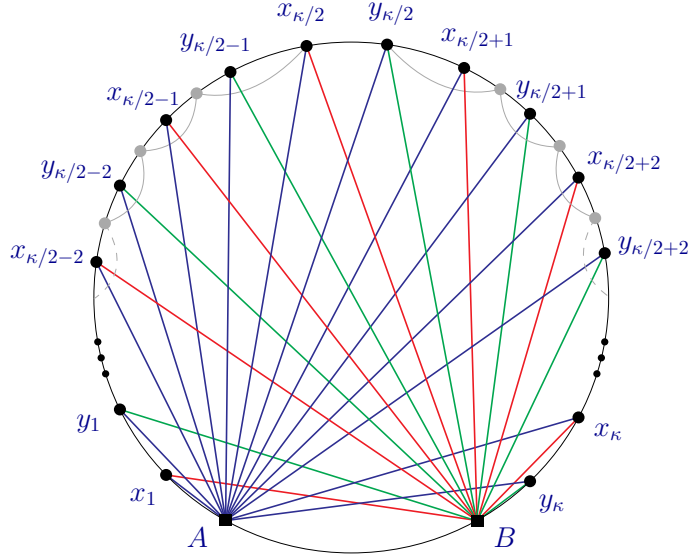}
	\caption{Illustration for Forbidden Pattern~\ref{p:AxyxyB}.}
	\label{fig:AxyxyB}
\end{figure}
	
To complete the proof, we observe that all the vertices of
$\mathcal{Q}_{\kappa/2}$ that are different from $A$ and $B$ and at distance~$1$
either from $x_{\kappa/2}$ or from $y_{\kappa/2}$ lie in $[x_{\kappa/2-1} \dots
y_{\kappa/2+1}]$, as otherwise an edge incident to $x_{\kappa/2}$ or 
$y_{\kappa/2}$ is inevitably crossing three edges of different colors; a
contradiction by Lemma~\ref{lem:no3}.\ref{no3:cross3}. By induction, we obtain
that all the vertices that are different from $A$ and $B$ and at distance~$j$
either from $x_{\kappa/2}$ or from $y_{\kappa/2}$ lie in $[y_{\kappa/2- j/2 -1}
\dots x_{\kappa/2+j/2 +1}]$, if $j$ is even, and in $[x_{\kappa/2-\lfloor j/2
	\rfloor-1} \dots y_{\kappa/2+\lfloor j/2 \rfloor+1}]$, if $j$ is odd. By the
definition of $d_Q$, any vertex of $\mathcal{Q}_{\kappa/2}$ different from $A$
and $B$ is in $[x_{\kappa/2-\lfloor d_Q/2 \rfloor - 1} \dots y_{\kappa/2+\lfloor
	d_Q/2 \rfloor + 1}]$, which by the choice of $\kappa$ is in $[x_2 \dots
y_{\kappa-1}]$, and thus in $[y_1 \dots x_k]$, as desired.
\end{proof}

\noindent By Cases~C.\ref{c:AxBy} and C.\ref{c:AxyB}, it follows that graph $G$
does not admit a $3$-page book embedding, which completes the proof of
Theorem~\ref{thm:main}.

We conclude this section with some insights on the size of graph $G$. For most
of the patterns that we proved to be forbidden, the value of $\kappa$ is~$3$.
However, in Forbidden Pattern~\ref{p:AxyxyB}, this value is increased to
$d_Q+5$, which equals $28$. Using this value, one can compute the number $N$ of
copies of graph $Q$ in the base graph $G_N$ with $n_Q = 354$, $m_Q = 1,\!056$
and $p_Q=62,\!481$. Since each of the $N$ copies of graph $Q$ in the base graph
$G_N$ gives rise to nine copies of the base graph in the final graph $G$, the
size of graph $G$ is enormously large. In the next section, we present a
considerably smaller graph that serves as a certificate to
Theorem~\ref{thm:main}.

\section{A Computer-Aided Proof}
\label{sec:sat}

In this section, we first briefly recall an efficient automatic approach for
computing book embeddings with certain number of pages that was first proposed
in \cite{DBLP:conf/gd/Bekos0Z15}. Then, we apply this approach (with appropriate
modifications) to find a medium-sized planar graph that requires four pages, and
to verify Facts~\ref{fact:1}~and~\ref{fact:2} for $Q_7$ and $Q_{10}$,
respectively.

To formulate the book embedding problem as a SAT instance,  Bekos et
al.~\cite{DBLP:conf/gd/Bekos0Z15} use three different types of variables,
denoted by $\sigma$, $\phi$ and $\chi$, with the following meanings:
\begin{inparaenum}[(i)]
\item for a pair of vertices $u$ and $v$, variable $\sigma(u,v)$ is $\texttt{true}$, if and only if $u$ is to the left of $v$ along
the spine, 
\item for an edge $e$ and a page $\rho$, variable $\phi_\rho(e)$ is $\texttt{true}$, if and only if edge $e$ is assigned to page $\rho$ of the book, and 
\item for a pair of edges $e$ and $e'$, variable $\chi(e,e')$ is $\texttt{true}$, if and only if $e$ and $e'$ are assigned to the same page.
\end{inparaenum}  
Hence, there exist in total $O(n^2+m^2+pm)$ variables, where $n$ denotes the number of vertices of the graph, $m$ its number of edges, and $p$ the number of available pages. A set of $O(n^3+m^2)$ clauses ensures that the underlying order is indeed linear, and that no two edges of the same page cross; for details we point the reader to~\cite{DBLP:conf/gd/Bekos0Z15}.

Using the above SAT formulation, we are able to test various planar graphs on
$3$-page embeddability. One that does not admit a $3$-page book embedding (see
Fig.~\ref{fig:needs4pages}) is constructed from graph $Q_8$ by removing the edge
connecting its poles $A$ and $B$ and by identifying its opposite terminals,
$t_0$ and $t_7$. %
Formally, start with an embedded $Q_8=(V_8, E_8)$ having the outerface $\langle
A, t_0, B, t_7 \rangle$ after the removal of $(A,B)$. Then, \df{contract} $t_0$
and $t_7$, that is, create a new graph $Q_8^{\circ}=(V_8^{\circ}, E_8^{\circ})$
in which
\begin{inparaenum}[(i)]
\item $V_8^{\circ} = V_8 \setminus \{t_8\}$, 
\item for every edge $(u, v) \in E_8$ such that $u \neq t_7, v \neq t_7$, there exists a corresponding edge $(u, v) \in E_8^{\circ}$, and
\item for every edge $(v, t_7) \in E_8$, there exists a corresponding edge $(v, t_0) \in E_8^{\circ}$; refer to Fig.~\ref{fig:needs4pages} for an illustration.
\end{inparaenum} 
It is easy to see that the contraction of $t_0$ and $t_8$ can be done in a planarity-preserving way, and hence, $Q_8^{\circ}$ is maximal planar with $275$ vertices and $819$ edges.  Observe that the graph has treewidth $4$; this is in contrast with planar graphs of treewidth $3$ that always admit $3$-page book embeddings~\cite{DBLP:conf/focs/Heath84}. 

\begin{figure}[t!]
	\centering
	\includegraphics[width=.9\textwidth,page=2]{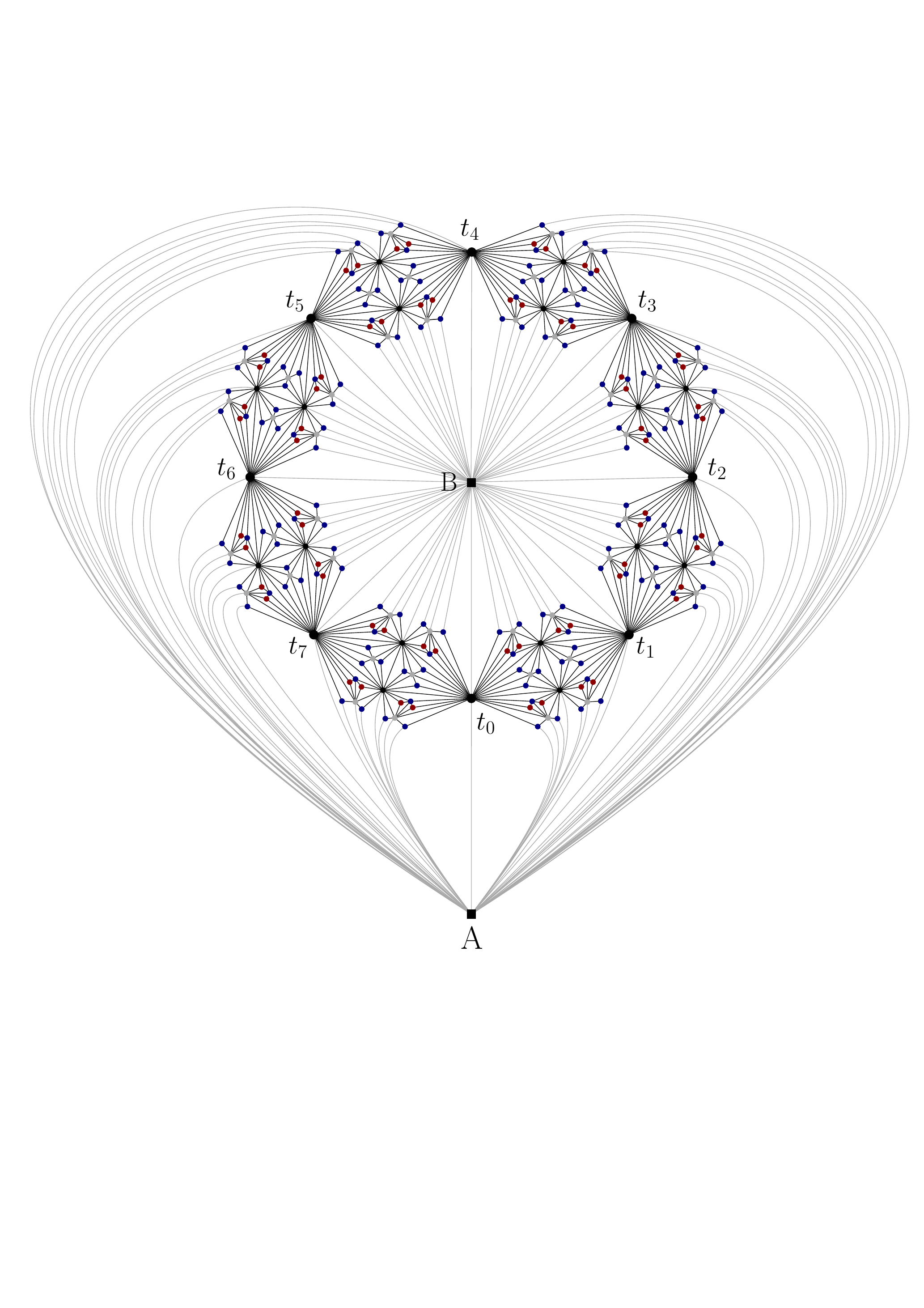}
	\caption{Illustration of graph $Q_8^\circ$ consisting of $275$ vertices and $819$ edges.}
	\label{fig:needs4pages}
\end{figure}

Our early attempts to verify $3$-page embeddability of $Q_8^{\circ}$ were
unsuccessful due to an enormous search space of possible satisfying assignments.
To reduce the search space, we introduce several \df{symmetry-breaking}
constraints, that is, variable assignments that preserve the satisfiability of
an instance:
\begin{itemize}
\item[--] we choose a particular vertex as the first one along the spine: $\sigma(A, v)$ for every $v \in V_8^{\circ} \setminus \{A\}$;

\item[--] since $Q_8^{\circ}$ is symmetric with respect to the terminals, we select $t_0$ to be the first among the terminals in the vertex ordering: $\sigma(t_0, t_i)$ for all $0 < i \le 6$;

\item[--] a vertex ordering can be reversed without affecting its book embeddability; we introduce a rule so that the SAT instance contains only one of the two possible solutions: $\sigma(t_1, t_2)$;

\item[--] to break symmetries of page assignments, we fix an edge to a particular page: $\phi_1(A, t_0)$;

\item[--] similarly, another edge can be assigned to one of the first two pages: $\phi_1(B, t_0) \vee \phi_2(B, t_0)$;

\item[--] since $K_4$ is not $1$-page book embeddable (as it is not outerplanar), we impose for every $K_4$ subgraph of $Q_8^{\circ}$ that not all its edges are assigned to the same page, namely, for every such a subgraph with  edges $e_1,\ldots,e_6$ we set: $\neg \phi_\rho(e_1) \vee \ldots \vee \neg \phi_\rho(e_6), \; \forall \;  1 \le \rho \le 3$.
\end{itemize}	

With two independent implementations of~\cite{DBLP:conf/gd/Bekos0Z15} and using
the above extra rules, we are able to verify that graph $Q_8^{\circ}$ is not
$3$-page embeddable, thus providing an alternative proof to
Theorem~\ref{thm:main}. The source codes of both implementations are available
to the community at~\cite{alice,bob}.

The first implementation~\cite{bob} was executed on a dual-node 28-core 2.4 GHz
Intel Xeon E5-2680 machine with $256$GB RAM. To verify unsatisfiability, we used
the \texttt{plingeling}~\cite{lingeling} parallel SAT solver, which needed
approx.\ $48$ hours using $56$ available threads. %
The second implementation~\cite{alice} was executed on a much weaker single-node
$4$-core 3.3 GHz Intel Core~i5-4590 machine with $16$GB RAM. Since the machine
is weaker, to verify unsatisfiability, we split the actual problem into
subproblems depending on the number of terminals between $A$ and $B$. Since the
graph is symmetric with respect to $A$ and $B$, it is enough to assume that
there exist $0$, $1$, $2$ or $3$ terminals between $A$ and $B$. For each of the
cases, we further distinguish subcases depending on the relative order of these
terminals. In total, we consider $28$ subproblems, which we solved using the
\texttt{lingeling}~\cite{lingeling} SAT solver on a single thread. The total
time needed to verify unsatisfiability of these subproblems was approx.\ $35$
hours.

We emphasize that $Q_8^{\circ}$ is likely the minimal graph from the considered
family that requires four pages. For example, an analogously constructed
$Q_7^{\circ}$, as well as non-contracted variants, $Q_k$, with up to $k = 10$,
do admit  book embeddings in three pages. Similarly, performing fewer
stellations yields $3$-page embeddable instances.

Unlike computationally expensive processing of $Q_8^{\circ}$, our approach is
very efficient for verification of Fact~\ref{fact:1} and Fact~\ref{fact:2}. The
main reason is that the generated SAT instances contain more constraints, which
significantly reduce the search space of possible solutions. We use the same two
implementations to verify that $Q_7$ does not admit a 3-page book embedding
under the restrictions of Fact~\ref{fact:1} and that $Q_{10}$ does not admit a
3-page book embedding under the restrictions of Fact~\ref{fact:2}. Both
implementations are able to process the graphs within several minutes, even when
a single-threaded SAT solver is utilized. Again we stress that the two graphs
are minimal in the considered family that satisfy the properties of the facts.

\section{Conclusion}
\label{sec:conclusions}

By closing the gap between the lower bound and the upper bound
on the book thickness of planar graphs, we resolved a problem that remained open for more than thirty years. We mention three interesting research directions that are related to our work.

\begin{enumerate}
\item There exist several subclasses of planar graphs with book thickness two
proposed in the literature. For example, $4$-connected planar
graphs~\cite{NC08}, planar graphs without separating
triangles~\cite{DBLP:journals/appml/KainenO07}, Halin
graphs~\cite{DBLP:journals/mp/CornuejolsNP83}, series-parallel
graphs~\cite{DBLP:conf/cocoon/RengarajanM95}, bipartite planar
graphs~\cite{DBLP:journals/dcg/FraysseixMP95}, planar graphs of maximum
degree~4~\cite{DBLP:journals/algorithmica/BekosGR16}, triconnected planar graphs
of maximum degree~5~\cite{DBLP:conf/esa/0001K19}, and maximal planar graphs of
maximum degree~6~\cite{Ewald1973}. On the other hand, the planar graphs with
book thickness three are less studied, and to the best of our knowledge only
include the class of planar $3$-trees~\cite{DBLP:conf/focs/Heath84}. Recently,
Guan and Yang~\cite{DBLP:journals/dam/GuanY2019} suggested an algorithm to embed
general (that is, not necessarily triconnected) planar graphs of maximum
degree~$5$ in books with three pages, but it is not known whether there exist
such graphs that require three pages (an open problem of independent research
interest). Here, we suggest to study other natural subclasses of planar graphs
with book thickness three. Two candidates are:
\begin{inparaenum}
\item the class of planar Laman graphs, and
\item the class of planar graphs with bounded maximum degree $\Delta \geq 7$. 
\end{inparaenum}
Note that both classes contain members that are not $2$-page book embeddable.

\item In the literature, book embeddings are also known as \df{stack layouts},
since the edges assigned to the same page (called \df{stack} in this context)
follow the last-in-first-out model in the underlying linear order. The ``dual''
concept of a book embedding is the so-called \df{queue layout} in which the
edges assigned to the same page (called \df{queue} in this context) follow the
first-in-first-out model. A recent breakthrough result by Dujmovi{\'{c}} et
al.~\cite{DBLP:conf/focs/DujmovicJMMUW19} suggests that planar graphs admit
queue layouts with at most 49 queues. Here, we are asking whether planar graphs
admit \emph{mixed} layouts with $s$ stacks and $q$ queues for some $s < 4$ and
$q < 49$? Such mixed layouts partition the edges of a graph into $s$ stacks and
$q$ queues, while using a common vertex ordering; they have been introduced by
Heath, Leighton and Rosenberg~\cite{DBLP:journals/siamdm/HeathLR92}.
Pupyrev~\cite{DBLP:conf/gd/Pupyrev17} showed that one stack and one queue do not
suffice for planar graphs, while de Col et al.~\cite{DBLP:conf/gd/ColKN19}
proved that testing the existence of a $2$-stack $1$-queue layout of general
(non-planar) graphs is \NP-complete.
	
\item Finally, we would like to see progress on the book thickness of planar
directed acyclic graphs (DAGs). Note that in the directed version of the book
embedding problem, the edge directions must be consistent with the constructed
vertex ordering. Heath et
al.~\cite{DBLP:journals/siamcomp/HeathPT99a,DBLP:journals/siamcomp/HeathP99b}
asked whether the book thickness of upward planar DAG is bounded by a constant,
and they provided constant bounds for directed trees, unicyclic DAGs, and
series-parallel DAGs. Frati et al.~\cite{DBLP:journals/jgaa/FratiFR13} extended
their results in the upward planar triangulations of bounded diameter or of
bounded maximum degree. However, the general question remains open.
\end{enumerate}	

\medskip\noindent\textit{Acknowledgment.} This work started at the GNV'19 workshop (30 June - 5 July, 2019, Heiligkreuztal, Germany). We thank the participants for fruitful discussions.

\bibliographystyle{splncs03}
\bibliography{stacks,general}

\end{document}